\def\BibTeX{{\rm B\kern-.05em{\sc i\kern-.025em b}\kern-.08em
    T\kern-.1667em\lower.7ex\hbox{E}\kern-.125emX}}
\definecolor{purple}{RGB}{0,0,0}
\definecolor{blue}{RGB}{0,0,0}
\definecolor{Iron}{rgb}{0.811,0.815,0.815}
\def\BState{\State\hskip-\ALG@thistlm}
\algnewcommand\algorithmicforeach{\textbf{for each}}
\newtheorem{remark}{Remark}
\newtheorem{proposition}{Proposition}
\begin{document}

\title{Efficient Multi-user Offloading of Personalized Diffusion Models: A DRL-Convex Hybrid Solution} 
\author{Wanting Yang, Zehui Xiong,~\IEEEmembership{Senior Member, IEEE}, Song Guo,~\IEEEmembership{Fellow, IEEE},  \\ Shiwen Mao,~\IEEEmembership{Fellow, IEEE},  Dong In Kim,~\IEEEmembership{Fellow, IEEE},   Merouane Debbah~\IEEEmembership{Fellow, IEEE}
\thanks{Manuscript received ...\\
Wanting Yang and Zehui Xiong are with the Pillar
of Information Systems Technology and Design, Singapore University of
Technology and Design, Singapore (e-mail: wanting\_yang@sutd.edu.sg; zehui\_xiong@sutd.edu.sg);\\
Song Guo is with the Department of Computer Science and Engineering,
The Hong Kong University of Science and Technology, Hong Kong, SAR,
China (e-mail: songguo@cse.ust.hk); \\Shiwen Mao is with the
Department of Electrical and Computer Engineering, Auburn University,
Auburn 36830, USA (e-mail: smao@ieee.org); \\Dong In Kim is with the Department of Electrical and Computer Engineering, Sungkyunkwan University, Suwon 16419, South Korea (e-mail:
dongin@skku.edu);  \\Merouane Debbah is with KU 6G Research Center, Khalifa University of Science and Technology, P O Box 127788, Abu Dhabi, UAE (email: merouane.debbah@ku.ac.ae) and also with CentraleSupelec, University Paris-Saclay, 91192 Gif-sur-Yvette, France.}}


\IEEEtitleabstractindextext{
\begin{justify}
\begin{abstract}

\textcolor{blue}{Generative diffusion models like Stable Diffusion are at the forefront of the thriving field of generative models today, celebrated for their robust training methodologies and high-quality photorealistic generation capabilities. 
These models excel in producing rich content, establishing them as essential tools in the industry. Building on this foundation, the field has seen the rise of \textit{personalized content synthesis} as a particularly exciting application.}
However, the large model sizes and iterative nature of inference make it difficult to deploy personalized diffusion models broadly on local devices with \textcolor{purple}{heterogeneous} computational power.  \textcolor{purple}{To address this}, we propose a novel
framework for efficient multi-user offloading of personalized diffusion models\textcolor{purple}{. This framework accommodates a variable number of users, each with different computational capabilities, and adapts to the fluctuating computational resources available on edge servers}. To enhance computational efficiency and \textcolor{purple}{alleviate} the storage burden on edge servers, we  propose a tailored multi-user hybrid inference approach\textcolor{purple}{. This method splits the inference process for each user into two phases, with an optimizable split point}. \textcolor{purple}{Initially, a cluster-wide model processes low-level semantic information for each user's prompt using batching techniques. Subsequently, users employ their personalized models to refine these details during the later phase of inference.} 
Given the constraints on edge server computational resources and users' preferences for low latency and high accuracy, we model the joint optimization of each user's offloading request handling and split point as an extension of the Generalized Quadratic Assignment Problem (GQAP). Our objective is to maximize a comprehensive metric that \textcolor{purple}{balances  both latency and accuracy across all users}.  To solve this NP-hard problem, we transform the GQAP into an adaptive decision sequence, model it as a Markov decision process, and develop a hybrid solution combining deep reinforcement learning with convex optimization techniques.  Simulation results validate the effectiveness of our framework, demonstrating superior optimality and low complexity compared to traditional methods. \textcolor{purple}{All related code, datasets, and fine-tuned models are available at \url{https://github.com/wty2011jl/E-MOPDM}}.

\end{abstract}
\end{justify}
\begin{IEEEkeywords}
Diffusion model, edge offloading, generalized quadratic assignment problem, DRL, AIGC service, hybrid inference
\end{IEEEkeywords}}

\maketitle

\IEEEdisplaynontitleabstractindextext

\vspace{0cm}
\section{Introduction}
\label{sec: Intro}
 The remarkable progress of generative models has ushered in a new era of artificial intelligence technology. Notable advancements include ChatGPT in natural language processing, demonstrating capabilities on par with human experts in a wide range of applications. Similarly, Stable Diffusion Models (SDMs)~\cite{rombach2022high}, developed based on Denoising Diffusion Probabilistic Models~\cite{ho2020denoising}, have \textcolor{purple}{gained} comparable prominence in computer vision. The impressive generative capabilities and the training robustness of SDMs have established them as a foundational tool in numerous applications, such as \textit{super-resolution}, \textit{image editing}, \textit{text-to-image} and  \textit{inpainting}, etc.~\cite{croitoru2023diffusion}.
 \textcolor{purple}{They are also pivotal in empowering industries such as telecommunications~\cite{yang2024rethinking} and medical~\cite{kazerouni2023diffusion} sectors.}
 Among these advanced applications, personalized content synthesis (PCS) stands out as \textcolor{purple}{a particularly important and eagerly anticipated branch} of research. 

The customizable features of PCS \textcolor{purple}{enable users to generate images tailored to objects, styles, faces, and other high-level semantic details. For example, this capability allows users to place themselves in virtual scenes like historic landmarks or future cities, or recreate rare pet behaviors, such as a cat leaping gracefully. As a result, PCS has garnered significant public interest, prompting a surge in research and development efforts within this area.}  Since the release of \textit{DreamBooth} and \textit{Textual Inversion} in August 2022, \textcolor{purple}{the field has rapidly expanded, with over 100 methods developed in a short time.~\cite{zhang2024survey}}. 
\textcolor{purple}{However, especially within the PCS field, rising privacy concerns have become particularly pronounced due to the personal nature of the content being generated. Additionally, the increasing number of users has put excessive pressure on cloud resources. Consequently, there is a strong demand for on-device deployment of SDMs to address these issues. } Yet, these models often come with large model sizes and \textcolor{purple}{substantial computational requirements}. In this sense, the slow and iterative sampling process during inference limits \textcolor{purple}{the widespread deployment of personalized SDMs on local devices, as they must adapt to the diverse computational capabilities of user devices.}


Recently, research aimed at \textcolor{purple}{enhancing inference speed in diffusion models has been organized into three main categories}. The first focuses on \textit{sampling step reduction}, which is best known for its implementation in Denoising diffusion implicit models (DDIMs)~\cite{song2020denoising}.  
The second approach is based on \textit{model compression}~\cite{yang2023diffusion, fang2023structural,li2023qdiffusion}. 
The third category achieves \textit{hybrid inference} through edge offloading~\cite{liu2023oms,10172151}. \textcolor{purple}{(More details of these techniques are discussed in Section~\ref{sec:2}.)}
Among these approaches, the DDIM has been integrated into SDMs. However, generating high-quality and detail-rich images still requires hundreds of denoising steps. 
In recent work~\cite{yan2024hybrid}, \textit{ByteDance Inc.} combines the aforementioned model compression and both hybrid inference methods for SDMs. The inference process begins with a smaller model on the user side, while the latter denoising steps, which require additional detail, are performed by a large-scale model on the cloud or edge servers. The simulation results confirm the promising gains brought by this \textit{composite hybrid inference} approach. \textcolor{purple}{However, unlike shared foundation models, personalized diffusion models are unique to each user. Extending this to a practical multi-user scenario, replicating each personalized model on the edge server for independent inference would place immense storage and computational burdens on the servers used for offloading. This approach deviates from the trend toward on-device applications and could potentially exceed the capabilities of even cloud infrastructure.}

\textcolor{purple}{{ To overcome this dilemma and facilitate the widespread adoption of personalized SDMs, this work focus on a balanced solution for multi-user personalized SDMs.}}
\textcolor{purple}{Our goal extends beyond optimizing specific inference algorithms. Instead, we intend to seamlessly integrate inference algorithm optimizations with real-world resource management, relying on the typical edge-device collaborative network framework.}
Specifically, we propose a novel framework for efficient multi-user offloading of personalized diffusion models. This framework \textcolor{purple}{addresses scenarios with} a variable number of users, diverse user computational capabilities, and \textcolor{purple}{fluctuating edge server resources}. \textcolor{purple}{To minimize redundancy and alleviate storage burdens, we employ a cluster-wide model that captures common personalized features of the cluster's users through fine-tuning, serving as a shared offloading model. Furthermore, we enhance computational efficiency by enabling batch processing of offloading tasks for multiple users. Nevertheless}, these improvements also present two new challenges.
\begin{itemize}[left=0pt]
    \item The hybrid inference approach, which combines the offloaded inference by the cluster-wide model with local inference by the personalized model. While offloading reduces inference latency, it also  introduces a degree of personalization loss. Thus, a trade-off between latency reduction and accuracy of personalized feature representation must be carefully considered when optimizing offloading steps.
    \item \textcolor{purple}{Existing approaches often overlook the server's parallel inference capabilities and assume that total latency is proportional to the number of tasks. Unlike these methods, the batching technique presents a unique challenge. As batch size increases, it leads to longer inference latency for each denoising step, thereby exacerbating the interdependence between individual user offloading decisions and overall system performance.}
\end{itemize}
To address the aforementioned challenges, we innovatively develop a comprehensive and precise optimization metric. Using this metric as a guide, we formulate the efficient multi-user offloading for personalized diffusion models as an extension of the Generalized Quadratic Assignment Problem (GQAP). Additionally, we design a low-complexity solution specifically for this extended version of GQAP. The key contributions of our approach are summarized as follows:

\begin{itemize}[left=0pt]
    \item We propose an efficient multi-user hybrid inference manner. In this framework, a cluster-wide SDMs, which is trained to capture common features of personalized demands among users within the cluster, is located at the edge server. Meanwhile, the personalized model of each user is deployed at the local individually. 
The inference process for each user is split into two phases with an \textit{optimizable split point}. For users participating in offloading, the initial phase of inference is performed on the cluster-wide model using batching techniques. This approach synchronously generates low-level semantic and contour information tailored to each user’s prompt. The intermediate results are subsequently returned to each user, who then applies a personalized SDM to add finer details in the later stages of the generation process.
\item We propose a tailored metric that balances latency and accuracy of personalized feature, interconnected through a weighting parameter. This parameter captures each user's preference between the two factors and can be customized individually to reflect their specific requirements. Based on rigorous mathematical analysis, we provide the range for this parameter along with insights into the corresponding effects. To quantify accuracy in PCS, we define a metric called Personalized Accuracy Index (PAI), which jointly considers the traditional accuracy, measured by the CLIP, and the variation in final generated images for different users given the same prompt, assessed  by LPIPS. We train a cluster-wide model and multiple personalized models using  DreamBooth and establish a unified mathematical model describing the relationship between PAI and the split point based on empirical simulations.
\item We formulate the joint optimization of offloading request handling and each user's individual split point as an extended GQAP, aiming to maximize overall performance using the tailored metric as a guiding principle. To meet the real-time decision-making requirements, we are, to the best of our knowledge, the first to apply Deep Reinforcement Learning (DRL) to solve GQAP, proposing a DRL-convex hybrid solution. Specifically, we introduce a novel paradigm that transforms GQAP into a sequential decision-making process and maps it into a Markov Decision Process (MDP) model. Meanwhile, we determine the optimal split point for each user based on convex optimization theory and integrate it into the environment of the MDP. We compare this algorithm with classical algorithms for solving GQAP, such as Branch \& Bound and heuristic algorithms, demonstrating its optimality and low  complexity, $O(n)$, regarding the number of users.
\end{itemize}

The rest of this paper is organized as follows. In Section~\ref{sec:2},
we review research in hybrid inference for SDMs and the existing solutions relevant to GQAP. Then, in Section~\ref{sec: systemmodel}, we introduce
the system overview, and  relevant models.  Then, we present the problem formulation in Section~\ref{sec: formulation} and the   DRL-convex hybrid solution is described in Section~\ref{sec: DRL-COnvex}.  An experimental evaluation is
presented in Section~\ref{sec:simulation}, followed by the conclusion and future directions in Section~\ref{sec:conclusion}.



\section{Related works}
\label{sec:2}
\vspace{-0.5cm}

\textcolor{purple}{\subsection{Inference Latency Reduction in Diffusion Models}
As discussed in Section~\ref{sec: Intro}, research aimed at overcoming these challenges can be organized into three main directions. The first direction, Sampling Step Reduction, is exemplified by Denoising Diffusion Implicit Models (DDIMs)~\cite{song2020denoising}, which reparameterize the diffusion process into a non-Markovian chain. This enables sampling in 50 - 200 steps, or even fewer, instead of the typical 1,000, but at the cost of reduced high-frequency detail generation and increased output variability. Efforts to balance speed with fidelity include adaptive step scheduling and noise-aware sampling, but challenges remain.
The second direction, Model Compression, includes architectural redesign, pruning, and quantization to streamline diffusion architectures while preserving semantic coherence. Architectural redesigns, such as replacing U-Net backbones with vision transformers in U-ViT~\cite{bao2023all}, achieve significant parameter reductions while maintaining competitive FID scores. Structured pruning strategies~\cite{fang2023structural} remove redundant elements to reduce computational demands, and quantization techniques demonstrated by Q-Diffusion~\cite{li2023qdiffusion} employ precision adjustments to maintain performance. Despite advancements, compressed models struggle with complex compositions and rare concepts, often requiring task-specific adjustments to avoid semantic drift.
The third category, \textit{hybrid inference}, includes (1) combining models of different sizes to offset quality degradation from exclusive reliance on compressed models~\cite{liu2023oms}, and (2) leveraging edge server's greater computing power to offload part of the denoising steps, enabling collaborative inference and reducing overall latency~\cite{10172151}.  A detailed discussion on hybrid inference will follow in Section~\ref{sec:2-1}.}
\begin{table}[]
\footnotesize
 \centering
 \caption{Summary of Main Notations}
\renewcommand{\arraystretch}{1.1}
\begin{tabular}{m{0.8cm}|m{7.3cm}}
\hline
\multicolumn{2}{c}{\textbf{System Setting-Related Symbols}}                                                                                                          \\ \hline \hline
$G$                         & The number of available GPUs at edge \\ \hline
{$\mathcal{I}$}            & {The set of user sending requests}                                        \\ \hline
{$B_{\rm max}$}      & {The maximum batchsize supported by the edge server}                        \\ \hline
{$N$}     & {The total denoising steps during inference}                       \\ \hline
{$\hat N$}                    & {The maximum number of offloaded denoising steps}                                              \\ \hline
{$W_{\rm max}$ } & {The total bandwidth shared by the granted users}                                                           \\ \hline
{$s^{\rm p}$}                & {The data size of prompt}                                                   \\ \hline
{$s^{\rm m}$}               & {The data size of intermediate data}                                      \\ \hline
{$\eta $}                       & {The spectral  efficiency over wireless link} \\ \hline \hline 
\multicolumn{2}{c}{\textbf{Metric and Decision-Related Symbols}}                             \\\hline \hline 
$\ell^{\rm R}$                                & The round-trip time (waiting latency) \\ \hline
$\ell^{\rm T}$                              & The transmission latency \\ \hline 
$\ell^{\rm E}$                              & The computing latency at edge \\ \hline 
$\ell^{\rm L}$                              & The transmission latency at local \\ \hline 
$L$                              & The end-to-end latency \\ \hline 
$n^*$                                 & The optimal split point  \\\hline 
$\mathbf{x}$                                    & The results of request handling      \\ \hline 
$b_{\rm g}$                                    & The number of granted user for offloading      \\ \hline
$\alpha_i$                                    & The customized emphasis parameter between $L$ and PAI     \\ \hline
$\mathscr{L}_e\left(\cdot\right) $                                    & The function of  inference latency per denoising step at edge on $b_{\rm g}$ and $G$    \\ \hline
$\mathscr{F}\left( \cdot\right) $                                    & The function of  personalized accuracy index metric on $n^*$    \\ \hline
\end{tabular}
\label{tbl: symbols}
\vspace{-0.5cm}
\end{table}

\subsection{Hybrid Inference for Diffusion Models}
\label{sec:2-1}
Since the remarkable success of diffusion models in generative tasks, research on latency optimization for inference processes has gained momentum. As discussed in Section~\ref{sec: Intro}, this research initially focused on reducing denoising steps and compressing the model. However, both approaches inevitably lead to some degree of performance degradation; the greater the reduction in inference latency, the more noticeable the performance loss. Beyond these two approaches, hybrid inference has been proposed, offering flexible combinations with the above methods.

In 2022, \textit{NVIDIA Corporation} first proposes the eDiff-I framework, where the inference process is divided into multiple stages. Each stage corresponds to a distinct model, focusing on different generative functions to enhance overall inference performance~\cite{balaji2022ediff}. Then, the authors in~\cite{pmlr-v202-liu23ab} design OMS-DPM, a predictor-based search algorithm, to determine the optimal model schedule given a specific generation time budget and a set of pre-trained models. Recently, in~\cite{yang2024denoising}, the authors present a new framework called DDSM, which employs a spectrum of neural networks whose sizes are adapted according to the importance of each generative step. 
In addition to the model's size and functionality, the authors in~\cite{du2023exploring} expand the concept of hybrid inference to incorporate computational heterogeneity within the inference process. Here, the inference process is split, with one part completed by a more powerful edge server and the latter part handled locally by the user device, connected via a wireless transmission link to reduce inference latency. Additionally, they propose an interesting idea: users with similar prompts can share the initial portion of the inference process on the edge server, reducing the edge server’s computational load. However, this inevitably comes at the cost of lower image generation quality. To address this, the authors in~\cite{xie2024gai} further optimize  the split point under this setting. Although the framework’s effectiveness is applicable when prompts from different users are highly similar, in practice, user prompts are often diverse. Therefore, this framework’s use cases are quite limited.

In addition, ByteDance further extends hybrid inference to simultaneously incorporate both model heterogeneity and computational heterogeneity, as introduced in~\cite{yan2024hybrid}. However, they only consider the single-user inference scenario and do not address the inevitable challenges in real-world applications, where cloud server resources available to users are limited. Consequently, they do not explore the necessary handling of offloading requests or the optimization of the split point based on personalized user requirements under such conditions. Overall, optimization of diffusion model inference for real-world applications remains in a very early stage.
\begin{figure*}[h]
    \centering
    \includegraphics[width=0.9\linewidth]{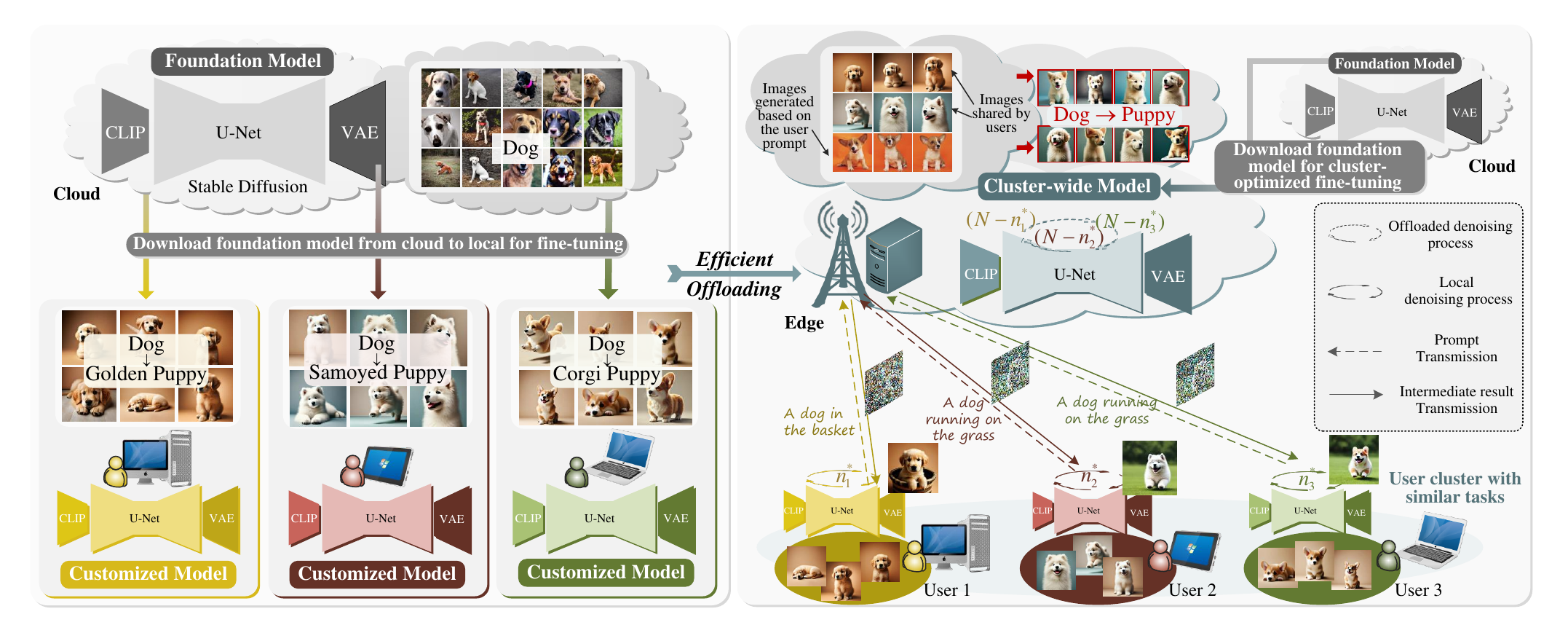}
    \caption{Illustration of system overview and pre-deployment approach.}
    \label{fig:systemmodel}
    \vspace{-0.5cm}
\end{figure*}
\subsection{Generalized Quadratic Assignment Problem}
\label{sec:2-3}
Combinatorial optimization problems (COPs) are widely present in network resource optimization, including the \textit{traveling salesman problem}~\cite{10598354}, \textit{knapsack problem}~\cite{liu2023resource}, and \textit{multi-armed bandit problem}~\cite{10444936}, etc. Among them, quadratic assignment problem (QAP) is considered as one of the most difficult to solve in combinatorial optimization~\cite{silva2021quadratic}. In addition, various variants of this problem, such as \textit{Quadratic Bottleneck Assignment Problem}, Biquadratic QAP, quadratic semi-assignment problem  and the relatively new GQAP, have continuously emerged, attracting significant attention over the past decades. A common approach is to linearize the quadratic terms using linearization techniques, enabling the problems to be solved with general-purpose mixed integer linear programming solvers. Two widely used techniques are standard linearization~\cite{glover1974converting} and the reformulation-linearization technique (RLT)~\cite{sherali1998reformulation}. Meanwhile, the authors in~\cite{pessoa2010algorithms} further integrate Lagrangean relaxation procedure over RLT, then combine with the classic Branch \& Bound scheme. Nevertheless,
the quadratic form of its objective function makes it challenging to find even an approximate optimal solution within limited time, especially when the problem size is slightly larger. To this end, the authors in~\cite{silva2021quadratic} propose a heuristic algorithm, a parallel iterated tabu search algorithm. Although it outperforms traditional integer linear programming solvers in terms of performance, its high computational complexity makes it challenging to apply in real-time optimization scenarios.

Meanwhile, the success of DRL in solving increasingly complex problems over the past decade has also inspired a wave of research on using DRL to tackle COPs with low complexity. Notable achievements have been made in solving problems such as the TSP, Vehicle Routing Problem, and Maximum Cut~\cite{khalil2017learning,ma2019combinatorial,cappart2021combining}. The basic idea is to transform the optimization of a high-dimensional variable into a decision sequence, thus formulating it as an MDP model based on the DRL paradigm. However, unlike ordering problems such as the TSP, the QAP problem \textcolor{purple}{hardly permits such a transformation}. Thus using DRL to solve QAP has received relatively little attention. To our knowledge, only two studies have explored this approach. In~\cite{pashazadeh2021difficulty}, the authors discuss the challenges of generalizing existing frameworks for TSP on QAP. Moreover, in~\cite{bagga2023solving} the authors  leverage a novel double pointer network, which alternates between selecting a location for the next facility and choosing a facility for the previously selected location. Although the effectiveness of this paradigm has been demonstrated, it is limited to standard QAP that requires an equal number of facilities and locations. Consequently, the algorithm’s applicability is restricted. In summary, a low-complexity algorithm for addressing the GQAP remains an open research area.

\section{System Model and Pre-deployment}
\label{sec: systemmodel}

\subsection{System Overview}
\label{sec: overview}
In this work, we consider a single-cell scenario with a cluster of users running the \textit{customized} on-device applications based on SDMs, as shown in Fig.~\ref{sec: systemmodel}.  Meanwhile, for users within a cluster, we assume their PCS tasks share similarities, such as personalized dog object generation, personalized female face generation, or similar personalized style. 
Moreover, the devices of the users, such as desktops, laptops, and smart mobile devices,   are characterized by diverse computational power. To reduce computational latency and facilitate the application proliferation, users can access edge offloading
services. However, user-specific model deployment  at the edge leads to considerable
storage overhead. To this end, an efficient offloading manner is considered in this work.
Specifically, these customized on-device applications are assumed to share a cluster-wide model hosted on the edge server.   This model, derived through cluster-oriented fine-tuning, enables efficient parallel processing of tasks with batching techniques. A feasible pre-deployment approach is detailed in Section~\ref{sec: pre-de}. 

To maximize overall system efficiency and performance, we adopt a {periodic \& centralized}  offloading mechanism.  We assume that the offloading strategy is executed at fixed intervals of duration ${\Delta}$, each consisting of $K$ time slots, where each time slot is indexed by $k$. Moreover, we denote the set of users sending requests during ${\Delta}$ as $\mathcal{I}$, with each user indexed by $i$. A user $i \in \mathcal{I}$ can send an offloading request at any time slot $k$, but must wait $(K-k)$ time slots to receive the server's feedback. This time interval is referred to as round-trip time (RTT), denoted by $\ell^{\rm R}$.  Moreover, we denote the number of available GPUs for the current user set $\mathcal{I}$ by $G$, which may vary over time.  We define that the offloading decisions made by the edge server for user requests include two options: \textit{grant} and \textit{deny}~\cite{liu2023resource}.  For the grant decision, the \textit{number of offloaded denoising steps} is required to be identified at the same time. All the decisions for the users in $\mathcal{I}$ are  jointly determined by four main factors: 1) the computational capacity of the users' devices; 2) the preferences for personalized accuracy  and latency; 3) the time of the request; and 4) the  current available resources at the edge. 
As with any trade-off, the degree of inference offloading inversely impacts the extent of degradation in meeting users' personalized requirements. In this sense, these factors are considered to achieve a desired trade-off between  latency and the final personalized accuracy for all users.  The metrics about  latency and personalized accuracy are specified in Section~\ref{sec: metric}. 

Once the offloading strategy is determined, all denoising steps for users who are denied are performed locally on their devices. For users who are granted, they first need to transmit their \textit{individual} prompts to the edge server. The edge server, based on the offloading strategy results for each user, performs the user-specific number of denoising steps at the edge for each user’s prompt, respectively. The intermediate results are then fedback to the users to complete the remaining offloading locally. (The hybrid inference model is detailed in Section~\ref{sec: hybird}.) For the uplink and downlink transmission involved in the process, we assume that  the spectral efficiency ${\eta}$ (or data rate per unit bandwidth) for user 
$i$ is fixed, with a given modulation and coding scheme.
Given that the transmission latency is remarkably lower than the computational  latency, we  simplify the bandwidth allocation. We assume that during the transmission process, all users granted for offloading equally share the system's reserved maximum bandwidth ${W_{\max }}$.

\begin{figure*}[t]
    \centering
    \includegraphics[width=0.9\linewidth]{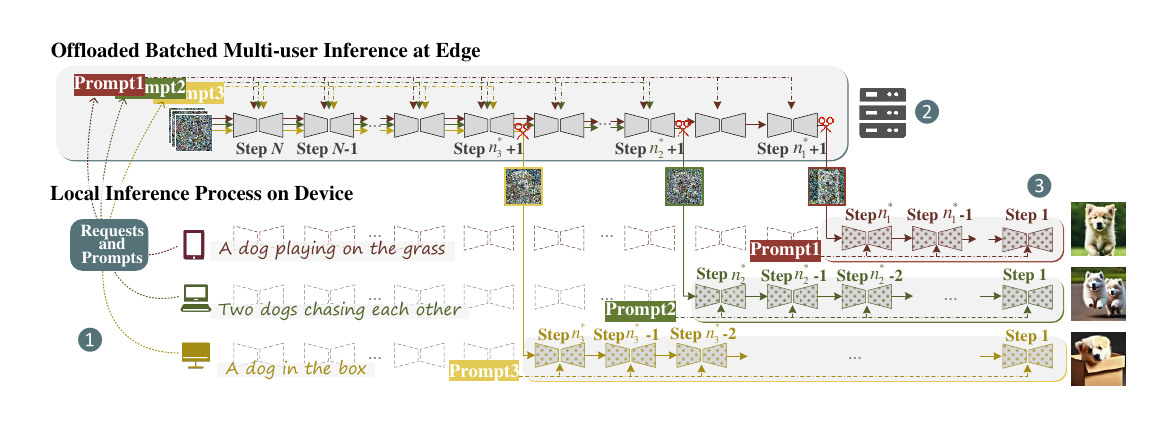}
    \caption{\color{purple}Illustration of the hybrid inference process with $n_i^*$ optimized individually for each user.}
    \label{fig:inferencemodel}
    \vspace{-0.5cm}
\end{figure*}

\subsection{Pre-deployment Approach}
\label{sec: pre-de}
Subsequently, we introduce a pre-deployment approach for the considered multi-user scenario as stated in Section~\ref{sec: overview}. For illustration purposes,  we consider the SDM-based on-device applications with \textit{personalized object generation}. This approach, however, is equally applicable to personalization tasks involving \textit{facial features}, \textit{style customization}, etc., supporting a variety of emerging diffusion-based applications such as extended reality, advertising, and remote education.

For example, as shown in Fig.~\ref{fig:systemmodel}, the term ``dog" for the available SDM refers broadly to all breeds and ages of dogs. In the absence of additional specific adjectives, the generated image of a dog is random. However, in practice, each user has a specific and largely consistent preference for the appearance of a dog. Therefore, users would prefer the term ``dog" to directly generate an image that aligns with their envisioned appearance, without the need  to input repetitive and redundant descriptive prompts each time. 
To achieve customized configurations, we assume the application uses the Dreambooth technique to fine-tune the downloaded SDM with private images. For example, the model can be fine-tuned to identify a dog as the user's specific pet, such as a golden retriever puppy.  Meanwhile, the cluster-wide model at the edge is fine-tuned using an image collection representative of a specific user cluster. These images can be either shared by users or generated based on prompts provided by them. For example, as illustrated in Fig.~\ref{fig:systemmodel}, in a cluster focused on various breeds of puppies, the data collection for fine-tuning consists of a selection of Samoyed and Golden puppy images shared by users and images generated by the prompt corgi puppy.  Then, the fine-tuned cluster-wide model can represent a dog's image as a shared representation common to different breeds of puppies. \textcolor{blue}{\textit{A case study on this, along with the visualization results, are presented in Section~\ref{sec:simulation}.}}


\begin{figure}[t]
    \centering
    \includegraphics[width=0.9\linewidth]{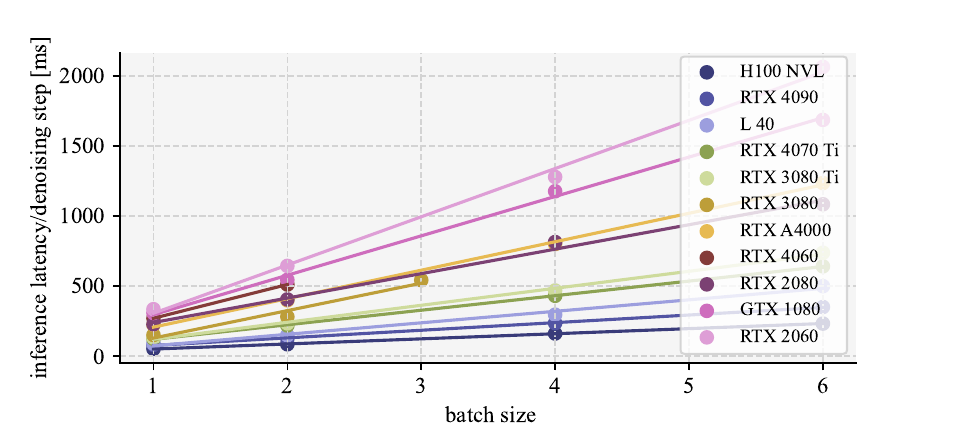}
    \caption{Illustration of the impact of batch size on latency per denoising step, based on empirical measurements.}
    \label{fig:gpu}
    \vspace{-0.5cm}
\end{figure}
\subsection{Hybrid Inference Model}
\label{sec: hybird}

Although the SDM consists of three components: \textit{Text encoder}, \textit{U-Net}, and \textit{Variational Autoencoder}, the computational complexity of the U-Net is the most significant, and it repeatedly participates in the denoising process iterations~\cite{kim2023bk}. In this sense,  we disregard the latency of the Text Encoder and Variational Autoencoder, assuming that both the  inference latency at the edge and local sides are mainly governed by their own number of denoising steps. Notably, the latency per denoising step differs between the edge and local sides, influencing their overall inference latency.

In contrast to traditional DNNs, such as convolutional neural networks, fully connected networks, and Transformer variants, which features a \textit{straightforward}, \textit{single forward pass} for inference, diffusion models employ an iterative inference process. Specifically, starting with a noise vector, the model progressively applies a series of denoising steps, repeatedly utilizing the same DNN, such as U-Net, to incrementally generate a high-quality output.
The inherently recurrent nature of the diffusion model's inference process challenges the effectiveness of existing partition-based offloading strategies for DNNs. To address the associated inference latency, this work proposes a hybrid inference model that partitions the iterative inference process in the temporal domain, as illustrated in Fig.~\ref{fig:inferencemodel}. 

Numerous simulation experiments indicate that the initial denoising steps primarily capture common low-level features, such as contours and blobs, with minimal inclusion of personalized information. Consequently, as shown in the bottom of Fig.~\ref{fig:inferencemodel}, users can offload the \textit{initial} denoising steps to a cluster-wide model, followed by a customized denoising process locally. Let the total number of denoising steps be denoted by $N$. Since the inference process reverses the sequence from the original data back to pure noise, the denoising steps are indexed in descending order. Thus, step $N$ is executed first, and step 1 is executed at last. Moreover, we assume that the last $\hat N$ denoising steps must be executed locally to meet users' personalized requirements. The value of $\hat N$ is determined experimentally in Section~\ref{sec:simulation}. Thereby, with the split point $n_i^* \in [ {\hat N,N} ]$, the denoising steps from $N$ to $n_i^* + 1$ are executed at the edge, and the remaining denoising steps from $n_i^*$ to $1$ are performed locally. For an extreme case, $n_i^* = N$ indicates that the entire inference process is performed locally.

Moreover, as shown in the top of Fig.~\ref{fig:inferencemodel}, To optimize the offloading process at the edge, the batching technique is employed to harness parallel computing capabilities and reduce memory access time. Specifically, at intervals of duration $\Delta$,  the edge server assembles the pure noisy data, denoising step indexes, and prompts of all the granted users, and batches them separately. Then, the formed three individual batches are fed as three distinct inputs into the model for parallel and iterative inference. Due to varying offloading task volumes among different users, some tasks may complete earlier, resulting in a reduced batch size.  
According to the NVIDIA report\footnote{https://www.nvidia.com/content/dam/en-zz/Solutions/Data-Center/tesla-product-literature/t4-inference-print-update-inference-tech-overview-final.pdf}, when utilizing the batching technique, the function of the computational latency on the batch size for a specific neural network is observed to be approximately linear, usually with a large constant term~\cite{liu2023resource}. In this context, we use ${{\mathscr L}}\left( b  \right) = {\rm k }b  + {\rm h }$  to characterize the computational latency of \textit{one denoising step},  where $b$ represents the batch size. Meanwhile, $\rm k$ and $\rm h$ are constant terms whose values depend on the adopted models of GPUs, which can be determined experimentally as shown in Fig.~\ref{fig:gpu}. Hereinafter, for the sake of simplicity, we use $\left( {{{\rm{k}}_i},{{\rm{h}}_i}} \right)$ and $\left( {{{\rm{k}}_e},{{\rm{h}}_e}} \right)$ to characterize the computation power of user $i$ and edge server, respectively. For the local inference, we can default the value of $b$ to always be 1. Then, we denote the latency per local denoising step for user $i$ by $\mathscr{L}_i\left(1\right)$ for short.
Furthermore, considering the commonly used data partitioning approach in parallel inference, we assume that with $G$ GPUs on the edge server, the inference latency can be approximately reduced to the latency corresponding to a batch size of $1/G$ of the original, that is, $\mathscr{L}\left( b \right)$ can be further modified as $\mathscr{L}_e\left( b, G \right) \approx {\rm k}_e\left( {b/G} \right) + {\rm h}_e$.  




\subsection{Quality of Experience Metric}
\label{sec: metric}

\subsubsection{End-to-end latency}
\label{sec:letancy}
The end-to-end (E2E) latency for a stable diffusion task is measured from the moment the user sends the offloading request to the edge server until the final generated image is obtained. As stated in Section~\ref{sec: overview}, the edge server responds to requests in two ways, each leading to a different composition of E2E latency. For users whose requests are denied, the E2E latency is equivalent to the sum of RTT and the latency of  full local inference, expressed as ${L_i} = \ell_i^{\rm{R}} + \ell_i^{\text{L}}$. For users whose requests are granted, the E2E latency includes two additional components,  which is given by ${L_i} = \ell_i^{\rm{R}} + \ell_i^{\text{E}} + \ell_i^{\text{T}} + \ell_i^{\text{L}}$, where $\ell_i^{\text{E}}$ denotes the inference latency at the edge, and $\ell_i^{\text{T}}$ represents the total transmission latency, including both uplink and downlink.  The detailed E2E latency calculations under a given strategy policy are provided in Section~\ref{sec: formulation}. 
\subsubsection{Personalized accuracy index}
In the context of efficient diffusion offloading, we introduce a tailored metric called the Personalized Accuracy Index (PAI), which integrates both personalization and accuracy. To optimize the offloading strategy, we establish the relation between PAI and the optimal split point $n^*$. Firstly, {\color{blue}for personalized accuracy measurement}, we leverage the Contrastive Language–Image Pre-training (CLIP) model\footnote{https://github.com/openai/CLIP .} to assess the the similarity between  the generated images and the user's prompt \textit{\color{blue}with additional personalized descriptions concerning the targeted object, style, and other relevant aspects}, denoted as ${\text{CLIP}}\left( {{ \text{p}_i^+,\text{m}_i}} \right)$.
{\color{blue}To further focus on personalized performance}, we use the Learned Perceptual Image Patch Similarity (LPIPS) model\footnote{https://github.com/richzhang/PerceptualSimilarity .} to measure differences in the final images generated by individual users, which is expressed by $\overline {{\rm{LPIPS}}} = \mathbb{E}_{1 \leq i < {i'} \leq |\mathcal{I}|} \left[\text{LPIPS}\left( \text{m}_i, \text{m}_{i'} \right)\right]$. These images denoted by $\text{m}_i, \text{m}_{i'}$ are based on the same intermediate result provided by the cluster-wide model at the edge server, with each user applying their customized models for subsequent denoising inference. In this sense, a greater difference between the images generated by different users indicates stronger personalization. \textcolor{blue}{Based on the fine-tuned cluster-wide model and local personalized model presented in Section~\ref{sec: pre-de}, we conducted extensive simulation experiments for hybrid inference by offloading varying numbers of denoising steps to the edge server, \textit{which is detailed in Section~\ref{sec:simulation}}}.
\textcolor{purple}{From the simulation results}, we can observe that,  as the split point decreases (the number of offloading steps increases), both the accuracy and personalization metrics exhibit a downward trend, with the decline in personalization being more pronounced. \textcolor{blue}{However, this trend is not gradual. As the number of offloaded denoising steps increases, the personalized features of each user remain well-preserved within a certain range. Yet, once a critical threshold--approximately when the offloaded steps reach or exceed half of the total denoising steps--is crossed, the degradation of these features accelerates significantly. With excessive offloading, the generated images across different users become nearly indistinguishable, ultimately losing their distinct personalized characteristics. This phenomenon further confirms the necessity of setting the value of $\hat N$ as stated in Section~\ref{sec: hybird}}.

Moreover, to capture this effect, we modulate the $\overline {{\rm{LPIPS}}}$  values using a sigmoid function, $\sigma_{ {{\rm{P}}}} \left( x \right) = \frac{1}{{1 + \exp \left( { - a\left( {x - b} \right)} \right)}}$, where $x = \overline {{\rm{LPIPS}}}$ with varying split point, $b$ represents the threshold at which personalization undergoes a significant shift, and 
$a$ indicates the slope of this change. Based on the range of $\overline {{\rm{LPIPS}}}$ values and our observations, we set 
$a=30$ and $b=0.1$.
Considering the interdependence and mutual reinforcement between accuracy and personalization, the PAI can be expressed by 
\begin{equation}
\text{PAI} = \kappa \cdot \mathbb{E}_{i \in {I_t}} \left[\text{CLIP}\left( \text{p}_i^{+}, \text{m}_i \right)\right] \cdot {\sigma _{\rm{P}}}\left( {\overline {{\rm{LPIPS}}} } \right), \label{eq:PAI}
\end{equation}
where $\kappa = 3$. 
Similarly, we also use the sigmoid fitting approach, denoted by $\mathscr{F}(\cdot)$,  \textcolor{blue}{with the fitting parameters $a_{\mathscr{F}} >0$ and $b_{\mathscr{F}}>0$}.
This function is utilized in Section~\ref{sec: formulation} to formulate the objective function that guides the optimization of the offloading strategy and \textcolor{blue}{is detailed in Section~\ref{sec:simulation}}. 
\section{Problem Formulation for Efficient Edge-assisted Offloading} 
\label{sec: formulation} 

As stated in Section~\ref{sec: hybird}, we assume that the edge server allocates $G$ GPUs to the set $\mathcal{I}$ of users who have sent requests within the past  time period $\Delta$. Therefore, we assume that the current offloading will not impact the performance of previous offloading strategies, and likewise, the performance of the current offloading strategy will not be affected by subsequent offloading. In this sense,  we mainly focus on one snapshot of personalized diffusion model offloading without loss of generality.


Before presenting the problem formulation, we first represent the outcome of the offloading strategy  by a one-hot matrix ${{\bf{x}}^{\left| {\cal I} \right| \times \left| {\cal D} \right|}}$ and an integer index vector ${{\bf{n}}^*}^{\left| {\cal I} \right| \times 1}$. Specifically,
we define two binary decision variables, $x_i^{\rm{g}}$  and $ x_i^{\rm{d}}$,  corresponding to \textit{grant} and \textit{deny}, respectively. For simplicity in subsequent
notation, we   abbreviate  as $x_i^j$, $j \in {\cal D} = \left\{ {{\rm{g}},{\rm{d}}} \right\}$. If $x_i^j = 1$, it indicates that the request handling corresponding to $j$ is selected for user $i$. In this sense, we should ensure 
\begin{equation}
   {\text{C1: }} \quad \sum\limits_{j \in {\cal D}} {x_i^j = 1}, \quad 
 \forall i \in {\cal I}, \label{eq:c1}
\end{equation} 
which signifies the
condition that each request has only one handling approach. Meanwhile, considering computational and communication efficiency as well as stability, we assume that the edge server imposes a concurrent user limit for the number of users performing offloading in a single round, denoted $B_{\max}$. In this sense, we have a common   constraint in offloading tasks, as below,
\begin{equation}
    {\text{C2: }} \quad  \sum\limits_{i \in {\cal I}} {x_i^{\rm{g}}}  \le B_{\max}. \label{eq: c2}
\end{equation}
In addition, for the spilt point determined by the offloading strategy for user~$i$, we denote it as $n_i^*$, which is consistent with the definition of $n^*$ in Section~\ref{sec: metric}. Therefore, we have 
\begin{equation}
    {\text{C3: }}\quad \hat N \leqslant n_i^* \leqslant N, \quad    \forall i \in {\mathcal{I}}. \label{eq: c3}
\end{equation}
Moreover, given the coupled relationship between  decision variables $ {\mathbf{x}}$ and ${{{\mathbf{n}}^*}}$, an additional constrain is required as below,
\begin{equation}
     {\text{C4:  }} \quad {{{x_{i}^{{\text{d}}}}}} \cdot \left( {N - n_i^*} \right){\text{  =  0,}} \quad \forall i \in {\mathcal{I}}, \label{eq:c4}
\end{equation}
which indicates that for users who do not offload, all denoising steps are completed locally, with the split point determined as $n_i^* = N$.

After defining the decision variables and their domains, we proceed to detail the objective function   based on the QoE metrics defined in Section~\ref{sec: metric}. First, we assign each user an individual weight parameter, denoted as 
${\alpha _i}$, which characterizes their emphasis on latency and PAI. The value of ${\alpha _i}$ can vary for each user. Therefore, the QoE of user $i$ can be comprehensively evaluated as
${{\alpha _i}{\cal F}\left( {n_i^*} \right) - {L_i}}$. 
According to the definitions of the two forms of E2E latency in Section~\ref{sec:letancy}, by integrating the decision variables  ${\bf{x}}$ into $L_i$, $L_i$
can be uniformly expressed as 
  \begin{equation}
      {L_i} = {\ell ^{\rm{R}}} + {\ell ^{\rm{T}}} \cdot x_i^{\rm{g}} + {\ell ^{\rm{E}}} \cdot x_i^{\rm{g}} + {\ell ^{\rm{L}}}.
  \end{equation}
Therein, ${\ell ^{\rm{R}}} = K -k$, which is consistent for all users regardless of the different request handling they are assigned with. Meanwhile, ${\ell ^{\rm{L}}}$ is also a common term for the E2E latency corresponding to different request handling results, but its representation varies. For $x_i^{\rm d} = 1$, the expression of ${\ell ^{\rm{L}}}$ is fixed as ${\ell ^{\rm{L}}} = N{{\mathscr L}_i}\left( 1 \right)$, while for $x_i^{\rm g} = 1$, its expression may vary depending on the user. With the constraint    \eqref{eq:c4} in mind, ${\ell ^{\rm{L}}}$ can be uniformly expressed as a function of $n_i^*$, given by
${\ell ^{\rm{L}}} = n_i^*{{\mathscr L}_i}\left( 1 \right)$. In contrast to ${\ell ^{\rm{R}}}$ and ${\ell ^{\rm{L}}}$, terms ${\ell ^{\rm{T}}}$ and ${\ell ^{\rm{E}}}$ only exist  in the E2E latency for users whose requests are granted. Specifically, for $\ell^{\rm T}$, it consists of  the latency for both the uplink transmission of the prompt and the downlink transmission for the intermediate result.
We denote the data sizes of the prompt and intermediate noisy data by $s^{\rm p}_i$, and $s^{\rm m}_{i}$, respectively. Although JPEG encoding may cause different split points to produce varying levels of image noisiness, which can slightly affect the data size of intermediate results, these differences have a negligible impact on the E2E latency. Therefore, we assume the data size of the intermediate results to be constant.  Then, $\ell_i^{\rm T}$ for the granted users with $x_i^{\rm{g}} = 1$ can be calculated by 
\begin{equation}
\ell _i^{\text{T}} =  {\tfrac{{s_i^{\text{p}}}}{{{\eta W_i^{\text{g}}}}} + \tfrac{{s_{i}^{{\text{m}}}}}{{{\eta W_i^\text{g}}}}}, \label{eq:lit}
\end{equation}
 where $W_i^{\rm g}$ represents the bandwidth allocated to user $i$. According to the straightforward bandwidth allocation described in Section~\ref{sec: overview}, where granted users share the bandwidth equally, we have 
 \begin{equation}
   {W_i^{\rm g}} = \tfrac{{{W_{\max }}}}{{\sum\nolimits_{i \in \mathcal{I}} {{x_i^{\text{g}}}} }}. \label{eq:w}
 \end{equation}
The calculation of $\ell_i^{\rm E}$ shares certain similarity with that of $\ell_i^{\rm T}$.
However, calculating the computational latency for edge offloading presents more complications. Due to the varying split points $n_i^*$ for each user, the batch size may change during the offloaded inference process, as shown in Fig.~\ref{fig:inferencemodel}, leading to potentially different computational latency for each denoising step. Nonetheless, considering that the number of denoising steps predominantly influences the overall latency, we simplify the modeling of the batch size's impact on latency. The batch size during the offloading at the edge is assumed to be fixed as  
\begin{equation}
    b_{\rm{g}}  =   \sum\limits_{i \in \mathcal{I}} {{x_{i}^{{\text{g}}}}}. \label{eq:lbo}
\end{equation}
 This simplification neglects the impact of certain tasks completing early on the reduction of the denoising step latency for the remaining tasks.  With the batch size to the maximum value, it provides a conservative estimate of the worst-case latency. In this context,  $\ell_i^{\rm E}$ can be expressed by 
\begin{equation}
\ell_i^{\rm E} = \left( {N - n_i^*} \right) {\mathscr L_e}\left( b_{\text g}, G \right).    \label{eq:leo} 
\end{equation}
In light of above, since we aim to optimize the overall performance of all the users, the optimization problem of the offloading strategy can be formulated as  
\begin{equation}
    \mathop {\max }\limits_{{\bf{x}},{{\bf{n}}^*}} \sum\limits_{i \in {{\cal I}}} {{\alpha _i}{\mathscr{F} }\left( {n_i^*} \right) - {L_i}} \label{P1} \tag{P1}
\end{equation}
subject to
\begin{equation}
  \eqref{eq:c1}, \eqref{eq: c2}, \eqref{eq: c3}, \eqref{eq:c4}.  \nonumber
\end{equation}

\begin{proposition}
The multi-dimensional coupled integer programming problem shown in \eqref{P1} is NP-hard.
\end{proposition}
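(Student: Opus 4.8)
The plan is to establish NP-hardness via a polynomial-time reduction from the Generalized Quadratic Assignment Problem, whose NP-hardness is classical (see Section~\ref{sec:2}). The point of departure is that, once the latency is written out, \eqref{P1} already exhibits the algebraic form of a GQAP instance augmented by a block of bounded integer variables: substituting \eqref{eq:lit}, \eqref{eq:w}, \eqref{eq:lbo} and \eqref{eq:leo} into $L_i$ and summing over $i$ turns $\sum_i L_i$ into a sum of per-user costs together with cross terms $x_i^{\rm g}x_{i'}^{\rm g}$, whose quadratic coefficients are produced by the batch-size dependence of $\mathscr{L}_e(b_{\rm g},G)$ and by the equal-sharing rule $W_i^{\rm g}=W_{\max}/\sum_{i}x_i^{\rm g}$, while $\sum_i\alpha_i\mathscr{F}(n_i^*)-\sum_i n_i^*\mathscr{L}_i(1)$ is separable in the integer vector $\mathbf{n}^*$ and couples to $\mathbf{x}$ only through \eqref{eq:c4}. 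Hence \eqref{P1} is, structurally, a quadratic $0$--$1$ program in $\mathbf{x}$ with the split points entering the quadratic coefficients.

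I would then proceed in the following order. (i) Fix the correspondence: users $\leftrightarrow$ facilities; the handling set $\mathcal{D}=\{{\rm g},{\rm d}\}$ $\leftrightarrow$ locations; \eqref{eq:c1} $\leftrightarrow$ the single-assignment constraint; \eqref{eq: c2} $\leftrightarrow$ the capacity bound on the ``grant'' location (unit demands, capacity $B_{\max}$); the per-user latency and PAI terms $\leftrightarrow$ the linear assignment costs; the batch-induced coupling $\leftrightarrow$ the quadratic interaction costs. (ii) Neutralize the extra freedom in $\mathbf{n}^*$ by selecting the instance parameters ($\hat N$, the weights $\alpha_i$, and the slope and offset of the fitted sigmoid $\mathscr{F}$) so that, given $x_i^{\rm g}$, the optimal $n_i^*$ is forced to a prescribed value; this collapses \eqref{P1} onto the pure $\mathbf{x}$-subproblem, which is exactly the target GQAP instance. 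Equivalently, one argues that re-enabling the freedom in $\mathbf{n}^*$ only enlarges the feasible set and therefore cannot make the problem easier. (iii) Verify that the constructed instance is feasible and that the construction runs in time polynomial in the GQAP input size. (iv) Conclude: a polynomial-time algorithm for \eqref{P1} would solve GQAP, so \eqref{P1} inherits its NP-hardness.

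The step I expect to be the real obstacle is (ii), because of the structural rigidity of \eqref{P1}: a general GQAP instance admits an arbitrary number of locations and an arbitrary flow-by-distance interaction matrix, whereas \eqref{P1} offers only two handling options and a quadratic part of the special shape dictated by $b_{\rm g}=\sum_i x_i^{\rm g}$. Making the embedding airtight may therefore be cleaner by reducing instead from a cardinality-constrained binary-quadratic problem---for instance the quadratic knapsack problem or densest-$k$-subgraph---whose geometry (a subset selection of size at most $B_{\max}$ with pairwise interaction costs) mirrors \eqref{P1} directly, and then checking that the coefficients generated by \eqref{eq:lit}--\eqref{eq:leo} are expressive enough to realize the chosen hard instances. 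Once this embedding is in place, the remaining bookkeeping---feasibility of the instance, polynomiality of the reduction, and equivalence of the optima---is routine.
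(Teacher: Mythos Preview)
Your plan and the paper's proof share the same backbone: neutralize the split-point vector $\mathbf{n}^*$, expand the latency terms via \eqref{eq:lit}--\eqref{eq:leo}, and recognize that what remains in $\mathbf{x}$ is a GQAP under constraints \eqref{eq:c1}--\eqref{eq: c2}. The paper executes this in the most direct way: it simply \emph{fixes} $n_i^*=\hat N$ for granted users (and $n_i^*=N$ for denied users via C4), writes the resulting objective explicitly as a linear-plus-quadratic form in $x_i^j$, absorbs the linear part into the diagonal of the quadratic coefficient tensor, and then cites the NP-hardness of GQAP to conclude that this restriction---hence the more general \eqref{P1}---is NP-hard.

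Where you differ is in rigor and in what you worry about. You correctly insist on the reduction running \emph{from} the hard problem \emph{to} \eqref{P1}, and you flag the expressiveness issue: with only two ``locations'' and quadratic coefficients of the very special shape forced by $b_{\rm g}=\sum_i x_i^{\rm g}$ and the equal-bandwidth rule, it is not obvious that arbitrary GQAP instances embed. The paper does not address this; it stops at ``the restricted problem has GQAP form and GQAP is NP-hard,'' which conflates membership in a problem class with hardness of the specific instance family. Your instinct to fall back on a cardinality-constrained binary-quadratic source (quadratic knapsack or densest-$k$-subgraph) is the right repair, since those match the two-option, subset-selection geometry of \eqref{P1} far more closely than a general GQAP does. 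In short: your route is the same identification followed by a tighter reduction argument; the paper's route is the same identification followed by a looser appeal to the hardness of the surrounding class.
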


\begin{proof}
As shown in \eqref{P1}, there are two sets of coupled  optimization variables, ${\mathbf{x}}$ and ${{\mathbf{n}}^*}$ with high dimensions of ${\left| {{\mathcal{I}}} \right| \times \left| \mathcal{D} \right|}$ and ${\left| {{\mathcal{I}}} \right| \times 1}$, respectively. To demonstrate that \eqref{P1} is an NP-hard problem, we first simplify it into the form of a standard NP-hard problem, ensuring that the fundamental complexity and core structure remain unchanged. Then, based on the transitivity of NP-hardness, we infer that the original problem is also NP-hard. Specifically,  we fix the value of the decision variable $n_i^*$ by $n_i^* = \hat N$ for the granted users, assuming that these users do not have high demands for image quality and are primarily focused on reducing latency. Once users opt for offloading, they offload the maximum possible number of denoising steps, i.e., $( {N - \hat N} )$. In this sense, the PAI metric $\sum\nolimits_{i \in \mathcal{I}} {\alpha_i \mathscr{F}\left( {n_i^*} \right)} $ can be reduced to a function dependent solely on ${\mathbf{x}}$, i.e., 
\begin{equation}
\sum\limits_{i \in \mathcal{I}} {\alpha_i \mathscr{F}\left( n_i^* \right)}= \sum\limits_{i \in \mathcal{I}} {\sum\limits_{j \in \mathcal{D}} {{C_{i
j}}{x_{i}^{j}}} }, \label{eq: p_performance_C}
\end{equation}
where  ${C_{i}^{\rm {d}}} = \alpha_i\mathscr{F}( N)$ and ${C_{i}^{\text{g}}} = \alpha_i\mathscr{F}(\hat N)$. Meanwhile, C3 and C4 are removed. Moreover, for the penalty term related to latency, substituting  \eqref{eq:w} into  \eqref{eq:lit}, we have 
\begin{equation}
   \ell _i^T = \sum\limits_{i' \in \mathcal{I}} {{A_{i'\text{g}i{\text{g}}}^\text{T}}\cdot{x_{i'}^{\text{g}} } \cdot {x_{i}^{{\text{g}}}}} , 
\end{equation}
where ${A_{i'\text{g}i{\text{g}}}^\text{T}}= \frac{{s_i^{\text{p}} + s_{i}^{{\text{m}}}}}{{{\eta }{W_{\max }}}}$. Considering the overall transmission latency of the multi-user system, we have 
\begin{equation}
    {L^{\text T}} = \sum\limits_{j \in \mathcal{D}} {\sum\limits_{j' \in \mathcal{D}} {\sum\limits_{i \in \mathcal{I}} {\sum\limits_{i' \in \mathcal{I}} {{A_{i'j'ij}^\text{T}}\cdot{x_{i'}^{j'}} \cdot {x_{i}^{j}}} } } } , \label{eq: LT}
\end{equation}
where ${A_{i'j'ij}^\text{T}} = 0$, if $j = \rm{d}$ or $j' = \rm{d}$. Similarly, substituting \eqref{eq:lbo}  into \eqref{eq:leo},   and explicitly expanding the form of the function  $\mathscr L_e (b,G) = {\rm{k}}_e(b /G) + {\rm{h}}_e$, we have 
\begin{equation}
\ell _i^{\text{E}} = D_{i{\rm{g}}}^{\text{E}} \cdot {x_{i}^{{\rm{g}}}} + \sum\limits_{i' \in \mathcal{I}} {A_{i'{\rm{g}}i{\rm{g}}}^{\text{E}} \cdot {x_{i'}^{{\rm{g}}}} \cdot {x_{i}^{{\rm{g}}}}},
\end{equation}
where $D_{i{\rm{g}}}^{\text{E}} = ( {N - \hat N} ){{{\rm h}_e}} $ and $A_{i'{\rm{g}}i{\rm{g}}}^{\text{E}} =  (N - \hat N){{{{\rm{k}}_e}} \mathord{\left/
 {\vphantom {{{{\rm{k}}_e}} G}} \right.
 \kern-\nulldelimiterspace} G}$. When extending to multi-user system, we have 
\begin{equation}
    {L^{\text{E}}} = \sum\limits_{i \in \mathcal{I}} {\sum\limits_{j \in \mathcal{D}} {D_{ij}^{\text{E}}} }  \cdot {x_{i}^{j}} + \sum\limits_{j \in \mathcal{D}} {\sum\limits_{j' \in \mathcal{D}} {\sum\limits_{i \in \mathcal{I}} {\sum\limits_{i' \in \mathcal{I}} {A_{i'j'ij}^{\text{E}} \cdot {x_{i'}^{j'}} \cdot {x_{i}^{j}}} } } } , \label{eq: LE}
\end{equation}
where $D_{i\text{d}}^{\text E} = 0$ and $A_{i'j'ij}^{\text{E} }=0$, except for  $A_{i'\text{g}i{\text{g}}}^{\text{E}}$. In addition, for the latency for local inference, similarly, we have 
\begin{equation}
{L^{\text L}} = \sum\limits_{j \in {\cal D}} {\sum\limits_{i \in {\cal I}} {D_{ij}^{\text L} \cdot x_i^j} }, \label{eq: LL}
\end{equation}
where $D_{i{\rm{d}}}^{\text{L}} = N({{\rm k}_i} + {{\rm h}_i})$ and $D_{i{\rm{g}}}^{\text{L}}  = 
{\hat N}({{\rm k}_i} + {{\rm h}_i})$. 
At last, for $L^{\text R}$, we have 
\begin{equation}
    {L^{\rm{R}}} = \sum\limits_{j \in {\cal D}} {\sum\limits_{i \in {\cal I}} {D_{ij}^{\rm{R}} \cdot x_i^j} }, \label{eq: LR} 
\end{equation}
where ${D_{ij}^{\text{R}}} = K - k_i$, and $k_i$ denotes the moment when user~$i$ sends the request to the edge, as stated in Section~\ref{sec: overview}.

Overall, with \eqref{eq: p_performance_C}, \eqref{eq: LT}, \eqref{eq: LE}, 
\eqref{eq: LL}, and \eqref{eq: LR}, the objective function of~\eqref{P1} can be rewritten as 
\begin{equation}
   \mathop {\max }\limits_{\mathbf{x}} \sum\limits_{i \in \mathcal{I}} {\sum\limits_{j \in \mathcal{J}} {{D_{ij}} \cdot {x_{i}^{j}}}  - } \sum\limits_{j \in \mathcal{D}} {\sum\limits_{j' \in \mathcal{D}} {\sum\limits_{i \in \mathcal{I}} {\sum\limits_{i' \in \mathcal{I}} {A_{i'j'ij}^{} \cdot {x_{i'}^{j'}}  {x_{i}^{j}}} } } },  \label{eq: re-obj}
\end{equation}
where ${D_{ij}} = {C_{ij}} - \left( {D_{ij}^{\rm{L}} + D_{ij}^{\rm{E}} + D_{ij}^{\rm{R}}} \right)$ and $A_{i'j'ij} = A_{i'j'ij}^{\text{T}} + A_{i'j'ij}^{\text{E}}$. According to \cite[Theorem 1]{hammer1970some}, the linear term can be  removed from \eqref{eq: re-obj} as the diagonal of matrix of quadratic terms. Then, \eqref{eq: re-obj} can be reformulated as 
\begin{equation}
    \mathop {\max }\limits_{\mathbf{x}} \sum\limits_{j \in \mathcal{D}} {\sum\limits_{j' \in \mathcal{D}} {\sum\limits_{i \in \mathcal{I}} {\sum\limits_{i' \in \mathcal{I}} {\tilde A_{i'j'ij}^{} \cdot {x_{i'}^{j'}} \cdot {x_{i}^{j}}} } } },  \label{eq:simplest}
\end{equation}
where $\tilde A_{ijij} = D_{ij} - A_{ijij}$  while all other elements of $\tilde A_{i'j'ij} = -  A_{i'j'ij}, i' \ne i, j' \ne j$ remain unchanged. If $\left| {\cal D} \right| = \left| {\cal I} \right|$, the optimization problem in \eqref{eq:simplest} is is a QAP. However, in the  vast majority of cases, $\left| {\cal D} \right| \ll \left| {\cal I} \right|$. Additionally, considering the remaining constraint 
 C2 in \eqref{eq: c2} from the original problem \eqref{P1}, the simplified version can be more accurately described as a GQAP, widely proven to be NP-hard~\cite{fisher1986multiplier,sahni1976p}
Since the reduced problem~\eqref{eq:simplest} remains NP-hard, the original~\eqref{P1} includes additional constraints and variables, making it more complex, it is reasonable to infer that~\eqref{P1} is at least NP-hard.
\end{proof}

\section{DRL-Convex Hybrid Solution for Efficient Edge-assisted Offloading}
\label{sec: DRL-COnvex}
To address the GQAP {\color{blue}about $\mathbf{x}$}, further complicated by the additional optimization of a coupled decision variable of integer index vector, {\color{blue}$\mathbf{\mathbf{n^*}}$}, in this section, we propose a novel DRL-convex hybrid approach.  {\color{blue}Specifically, to make the extended version of GQAP tractable}, we first transform it into two nested sub-problems with $\mathbf{x}$ and $\mathbf{n}^*$, respectively, which is detailed in Section~\ref{sec:nest}.
\subsection{{\color{blue}Insights for} Nested Sub-Problem Formulation}
\label{sec:nest}
Recall \eqref{P1}: the optimal values of $\mathbf{x}$ and $\mathbf{n}^*$ are interdependent. {\color{blue}The total number of granted users, determined by $\mathbf{x}$, influences the batch size during the inference at the edge server, which in turn affects the latency of each offloaded denoising step. This leaves a customized trade-off between PAI and E2E latency. It is shaped by each user's local computational capacities $\left( {{{\rm{k}}_i},{{\rm{h}}_i}} \right)$ and emphasis factors $\alpha_i$, ultimately impacting the optimal value of $n_i^*$.
Meanwhile, for each user $i$, the choice of $n_i^*$  directly affects their local inference latency $\ell_i^{\rm L}$ and offloaded inference latency $\ell_i^{\rm E}$. For a specific emphasis factor  $\alpha_i$, each user's end-to-end latency and PAI jointly determine the overall performance of the whole scenario. Given the shared resources at the edge, it in turn influences the optimal decision variable~$\mathbf{x}$.

}

Nevertheless, through an in-depth analysis of {\color{blue}the coupling nature}, the  variable $\mathbf{x}$ {\color{blue}emerges as the dominant decision variable between the two. Given a specific value of $\mathbf{x}$, the inference latency of each denoising step at the edge server becomes determined. Then, the optimization over 
$n_i^*$ can be decoupled across users and solved with a well-defined closed-form expression with the consideration of their local computational capacities $\left( {{{\rm{k}}_i},{{\rm{h}}_i}} \right)$ and emphasis factors $\alpha_i$. Moreover, each individual $n_i^*$ collectively contributes to the globally optimal $\mathbf{n}^*$. In other words, for a given $\mathbf{x}$, it is relatively straightforward to derive the optimal $\mathbf{n}^*$ that maximizes system performance under the given $\mathbf{x}$.} However, $\mathbf{n}^*$ cannot be pre-set to readily determine $\mathbf{x}$, {\color{blue} as it does not reduce the coupling effect among users. This leads to an asymmetric dependency between $\mathbf{n}^*$ and $\mathbf{x}$}.

With this in mind, we transform the original problem into two nested sub-problems. 
{\color{blue}We consider the optimization of request handling, $\mathbf{x}$, as the outer-layer optimization, while the optimization of split point, $\mathbf{n}^*$, is treated as inner-layer optimization. Given the explicit formulation  and the unified expression of the problem about $n_i^*$ across all users under a specific $\mathbf{x}$, we employ convex optimization with continuous relaxation to solve it, as detailed in Section~\ref{sec:inner op}.
However, for a given $\mathbf{x}$, explicitly quantifying how the optimization of $\mathbf{n}^*$ contributes to the overall performance achievable by the current $\mathbf{x}$ remains challenging. Furthermore, the high dimensionality of the decision variable 
$\mathbf{x}$ adds to the complexity of the outer sub-problem. 
In this sense, we adopt a DRL-based approach to effectively capture a satisfactory solution about $\mathbf{x}$. \textit{Meanwhile, the implicit influence of $\mathbf{n}^*$
  on on the optimality of $\mathbf{x}$ is incorporated into the environment modeling within the DRL paradigm.} This is elaborated in Section~\ref{sec:outer op}. Notably, the powerful learning capability of DRL also mitigates the impact of decomposing \eqref{P1} into two sub-problems on overall optimality.}

\subsection{Convex Optimization for Split Point Selection}
\label{sec:inner op}
In this subsection, {\color{purple}we first present the formulation of inner-layer optimization about $\mathbf{n}^*$}. Then, we analyze the convexity of the optimization problem, followed by specifying the method for determining the optimal solution. Finally, based on the problem analysis, we provide insights on selecting emphasis values, {\color{purple}$\alpha_i$}, for {\color{purple}individual} users with varying computational capacities.

For the users with  $x_{i}^{\rm{d}} = 1$, we set $n_i^* = N$ directly. Moreover, we have that $L_i = \ell_i^{\rm R} + \ell_i^{\rm L}$, which is solely determined by the moment when the user sends request and the computing power of the local device. In this sense,
given a feasible $\mathbf{x}$, the inner optimization problem on the split point  only needs to {\color{purple}be carried out} for {\color{purple}the granted users for offloading, i.e., the users with $x_{i}^{\rm{g}} = 1$}.  Moreover, for the granted users, $\ell_i^{\rm T}$ is {\color{purple}solely determined} $\mathbf{x}$. {\color{purple}Therefore, once $\mathbf{x}$ is given,  $\ell_i^{\rm T}$ can be treated as a constant}.  Meanwhile, the value of $\ell_i^{\rm E}$ is joint determined by $\mathbf{x}$ and ${n}_i^*$. {\color{purple}For a given $\mathbf{x}$,  it depends only on $n_i^*$}. Therefore, to ensure clarity in the subsequent analysis, we simply $\ell_i^{\rm E}\left(n_i^*,\mathbf{x}\right)$ and $\ell_i^{\rm T}\left(\mathbf{x}\right)$   to $\ell_i^{\rm E}\left(n_i^*\right)$ and $\ell_i^{\rm T}$, respectively. In this context, the inner sub-problem can be specifically formulated as follows:
\begin{equation}
    \mathop {\max }\limits_{n_i^* \in [ {\hat N,N} ]} {\alpha _i}\mathscr{F}\left( {n_i^* 
    } \right) - \left( {{\ell^{\rm{R}}}  + {\ell^{\rm{E}}}\left( {n_i^*} \right) + {\ell^{\rm{T}}} + {\ell^{\rm{L}}}\left( {n_i^*} \right)} \right),\tag{P2} \nonumber\label{P2}
\end{equation}
{\color{purple}where $i \in \left\{ {i\left| {x_i^{\rm{g}} = 1 \wedge x_i^{\rm{d}} = 0} \right.} \right\}$}.

\begin{proposition}
The split point ${n_i^*}$ optimization is concave problem with a unique optimal solution within the domain $[\hat N,N]$.
\end{proposition}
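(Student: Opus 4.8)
The plan is to exploit the fully explicit structure of the objective in \eqref{P2} and reduce the whole question to the curvature of a single logistic term. Fixing a feasible $\mathbf{x}$ with $x_i^{\rm g}=1$, the objective $U(n_i^*) := \alpha_i\mathscr{F}(n_i^*) - \big(\ell^{\rm R} + \ell^{\rm E}(n_i^*) + \ell^{\rm T} + \ell^{\rm L}(n_i^*)\big)$ splits into a part with possible curvature, $\alpha_i\mathscr{F}(n_i^*)$, and the rest. First I would observe that $\ell^{\rm R}$ and $\ell^{\rm T}$ are constants for a given $\mathbf{x}$, while $\ell^{\rm E}(n_i^*) = (N - n_i^*)\,\mathscr{L}_e(b_{\rm g},G)$ and $\ell^{\rm L}(n_i^*) = n_i^*\,\mathscr{L}_i(1)$ are affine in $n_i^*$; hence $-(\ell^{\rm R} + \ell^{\rm E}(n_i^*) + \ell^{\rm T} + \ell^{\rm L}(n_i^*))$ is affine, and since the feasible set $[\hat N,N]$ is an interval (hence convex), the concavity of \eqref{P2} reduces to concavity of $n_i^*\mapsto \alpha_i\mathscr{F}(n_i^*)$ on $[\hat N,N]$. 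Because $\alpha_i>0$, it suffices to prove that $\mathscr{F}$ is (strictly) concave on this interval.

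Next I would use the sigmoid fit of Section~\ref{sec: metric}: up to a positive affine rescaling, $\mathscr{F}(n) = c_1\,s(n) + c_2$ with $c_1>0$ and $s(n) = 1/\!\left(1+\exp(-a_{\mathscr{F}}(n-b_{\mathscr{F}}))\right)$, where $a_{\mathscr{F}} = 0.0413 > 0$ and $b_{\mathscr{F}} = 71.44$. A direct differentiation gives $s'(n) = a_{\mathscr{F}}\, s(1-s)$ and $s''(n) = a_{\mathscr{F}}^2\, s(1-s)(1-2s)$, so $\mathscr{F}''(n) = c_1 a_{\mathscr{F}}^2\, s(n)(1-s(n))(1-2s(n))$. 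Since $s(n)\ge \tfrac12 \iff n\ge b_{\mathscr{F}}$, we get $\mathscr{F}''(n)<0$ strictly for every $n>b_{\mathscr{F}}$. The decisive point is the numerical fact $\hat N = 80 > 71.44 = b_{\mathscr{F}}$: the inflection point of the fitted sigmoid lies strictly below the feasible region, so on all of $[\hat N,N]$ we remain on the concave branch and $\mathscr{F}'' < 0$ throughout. Therefore $U = (\text{strictly concave}) + (\text{affine})$ is strictly concave on the compact interval $[\hat N,N]$, which is precisely the statement that \eqref{P2} is a concave maximization problem.

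Finally, uniqueness follows because a continuous strictly concave function on a nonempty compact interval attains its maximum at exactly one point. As a by-product for the solver in Section~\ref{sec:inner op}, I would note that the unconstrained stationarity condition $\alpha_i\mathscr{F}'(n_i^*) = \mathscr{L}_i(1) - \mathscr{L}_e(b_{\rm g},G)$ has, by strict monotonicity of $\mathscr{F}'$ on $[\hat N,N]$, at most one root, and the constrained optimizer is obtained by clipping that root to $\{\hat N,N\}$. I do not anticipate a real obstacle; the only delicate step is the curvature analysis of $\mathscr{F}$, where one must confirm that $b_{\mathscr{F}}<\hat N$ so that no convex portion of the logistic enters the feasible region — were that inequality to fail, \eqref{P2} would be merely quasi-concave on a sub-interval and the uniqueness conclusion could no longer be guaranteed.
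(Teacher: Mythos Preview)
Your proposal is correct and follows essentially the same route as the paper: both reduce the question to the sign of the second derivative of the logistic $\mathscr{F}$, compute $\mathscr{F}''=a_{\mathscr{F}}^2\,\mathscr{F}(1-\mathscr{F})(1-2\mathscr{F})$, and use $\mathscr{F}(n_i^*)>\tfrac12$ on $[\hat N,N]$ to conclude strict concavity and hence a unique maximizer. The only cosmetic difference is that you justify $\mathscr{F}>\tfrac12$ via the numerical check $\hat N=80>b_{\mathscr{F}}=71.44$, whereas the paper appeals to Fig.~\ref{fig:visualization}(b); the paper then goes a bit further and enumerates the boundary/interior cases for the optimizer, which you summarize as clipping the stationarity root.
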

\begin{proof}
For simplicity, we combine the constant terms that are independent of $n^*$.  Recall that $\ell_i^{\rm E} = (N-n_i^*)\mathscr{L}_e(b_{\rm g}, G)$ and $\ell_i^{\rm L} = n_i^*\mathscr{L}_i(1)$, the objective function is rewritten as
\begin{equation}
\mathscr{O}(n_i^*) = {\alpha _i}\mathscr{F}\left( {n_i^*} \right) - \left( {{\mathscr{L}_i}\left( 1 \right) - {\mathscr{L}_e}\left( {{b_{\rm g}}}, G\right)} \right)n_i^* + \Gamma,
\end{equation}
where $\Gamma  = {\ell^{\rm{R}}} + {\ell^{\rm{T}}} - N{\mathscr{L}_e}\left( {{b_{\rm g}}}, G \right)$. Recall the expression of $\mathscr{F}\left(n_i^*\right)=\frac{1}{{1 + \exp \left( { - a_{\mathscr{F}}\left( {x - b_{\mathscr{F}}} \right)} \right)}}$  as well as the second derivative of the sigmoid function, we have 
\begin{equation}
\frac{{d\mathscr{O}(n_i^*)}}{{dn_i^*}} = {\alpha _i}a_{\mathscr{F}}{\mathscr{F}}\left( {n_i^*} \right)\left( {1 - {\mathscr{F}}\left( {n_i^*} \right)} \right) - \left( {{{\mathscr{L}}_i}\left( 1 \right) - {{\mathscr{L}}_e}\left( {{b_{\rm g}}}, G \right)} \right),\nonumber
\end{equation}
\begin{equation}
    \frac{{{d^2}\mathscr{O}(n_i^*)}}{{d{{(n_i^*)}^2}}} = {\alpha _i}{a_{\mathscr{F}}^2}{\mathscr{F}}\left( {n_i^*} \right)\left( {1 - {\mathscr{F}}\left( {n_i^*} \right)} \right)\left( {1 - 2{\mathscr{F}}\left( {n_i^*} \right)} \right).\nonumber
\end{equation}
Considering that both ${\alpha _i}$ and $a_\mathscr{F}$ are positive and $\mathscr{F}(n_i^*) \in (0,1)$, the sign of 
 $\frac{{{d^2}\mathscr{O}(n_i^*)}}{{d{{(n_i^*)}^2}}}$ depends on $\left( {1 - 2\mathscr{F}\left( {n_i^*} \right)} \right)$. 
\textcolor{purple}{Since $\mathscr{F}(n^*)$ is a monotonically increasing function, we can select an appropriate value of $\hat N$ to ensure that 
$\mathscr{F}(n_i^*) > 0.5$ for all 
$n_i^* \in [ \hat N, N ]$. }
 Therefore, $\frac{{{d^2}\mathscr{O}(n_i^*)}}{{d{{(n_i^*)}^2}}}$ is constantly less than 0 within the domain. In this sense, $\mathscr{O}(n_i^*)$ is strictly concave. Within $[\hat N, N]$, \eqref{P2} has a unique optimal solution.
 
Firstly, in case of  ${{{\mathscr{L}}_i}\left( 1 \right) < {{\mathscr{L}}_e}\left( {{b_{\rm g}}}, G \right)}$, which means that the local inference latency per denoising step is lower than the inference latency at the edge server,
$\frac{{d\mathscr{O}(n_i^*)}}{{dn_i^*}}$  is strictly positive. In this context, when $n_i^* = N$, \eqref{P2} reaches its maximum value, which implies that user $i$ completes inference locally. In contrast, if ${{{\mathscr{L}}_i}\left( 1 \right) > {{\mathscr{L}}_e}\left( {{b_{\rm g}}}, G \right)}$, the solution has three possible cases. Given that $\frac{{{d^2}\mathscr{O}(n_i^*)}}{{d{{(n_i^*)}^2}}} < 0$,  $\mathscr{O}'(n_i^*) = \frac{{d\mathscr{O}(n_i^*)}}{{dn_i^*}}$  is monotonically decreasing on $[\hat N, N]$.  Denote the term of ${\alpha _i}{a_{\mathscr{F}}}{\mathscr{F}}\left( {n_i^*} \right)\left( {1 - \mathscr{F}\left( {n_i^*} \right)} \right)$ by ${\mathscr{\bar F}}'(n_i^*)$. If ${\mathscr{\bar F}}'(\hat N) > {\mathscr{\bar F}}'(N) > \left( {{{\mathscr{L}}_i}\left( 1 \right) - {{\mathscr{L}}_e}\left( {{b_{\rm g}}}, G \right)} \right)$, $\mathscr{O}(n_i^*)$ is monotonically increasing within the domain, and $n_i^* = N$. If ${\mathscr{\bar F}}'(N) < {\mathscr{\bar F}}'(\hat N) < \left( {{{\mathscr{L}}_i}\left( 1 \right) - {{\mathscr{L}}_e}\left( {{b_{\rm g}}}, G\right)}\right)$, $\mathscr{O}(n_i^*)$ is monotonically decreasing within the domain, and $n_i^* = \hat N$. In contrast, in case of  ${\mathscr{\bar F}}'( N)  < \left( {{{\mathscr{L}}_i}\left( 1 \right) - {{\mathscr{L}}_e}\left( {{b_{\rm g}}},G \right)} \right) < {\mathscr{\bar F}}'(\hat N)$, 
  there exists a value of $n_i^*$ that makes $\frac{{d\mathscr{O}(n_i^*)}}{{dn_i^*}} = 0$, which is the optimal solution. Due to the difficulty in obtaining a closed-form expression for $n_i^*$, its solution can be found using libraries for solving equations, such as  SciPy.
\end{proof}

\begin{remark}
In the optimal solution analysis above, for users whose local inference speed is lower than the inference speed of edge offloading, the value of $\alpha_i$ can be determined based on the following guidelines.
 For a given ${\alpha _i} > \frac{{\left( {{\mathscr{L}_i}\left( 1 \right) - {\mathscr{L}_e}\left( {{b_{\rm g}}},G \right)} \right)}}{{{a_\mathscr{F}}{\mathscr{F}(N)(1-\mathscr{F}(N))}}}$,  the focus of user $i$ on PAI is overwhelmingly dominant. For a given ${\alpha _i} < \frac{{\left( {{\mathscr{L}_i}\left( 1 \right) - {\mathscr{L}_e}\left( {{b_{\rm g}}},G \right)} \right)}}{{{a_\mathscr{F}}{\mathscr{F}(\hat N)(1-\mathscr{F}(\hat N))}}}$, the optimization problem effectively reduces to optimizing solely for latency. In contrast,  for
 $\frac{{\left( {{\mathscr{L}_i}\left( 1 \right) - {\mathscr{L}_e}\left( {{b_{\rm g}}},G \right)} \right)}}{{{a_\mathscr{F}}{\mathscr{F}(\hat N)(1-\mathscr{F}(\hat N))}}} < {\alpha _i} < \frac{{\left( {{\mathscr{L}_i}\left( 1 \right) - {\mathscr{L}_e}\left( {{b_{\rm g}}},G \right)} \right)}}{{{a_\mathscr{F}}{\mathscr{F}(N)(1-\mathscr{F}(N))}}}$, it can achieve a personalized  trade-off between PAI and latency, with larger values of ${\alpha _i}$ placing greater emphasis on PAI. Although the exact value of $b_{\rm g}$
  is unknown, the service provider can estimate 
${\hat b}_{\rm g}$
  based on the number of active users in the area, thereby offering each user a range for selecting $\alpha_i$.                                               Additionally, it should be noted that the latency optimization considered in the individual split point  only accounts for the computational latency. To incorporate other types of latency integrated in $\Gamma$, the exponent of the total latency term in $\mathscr{O}(n_i^*)$ can be increased accordingly.
\end{remark}
\begin{figure}[t]
    \centering
    \includegraphics[width=0.9\linewidth]{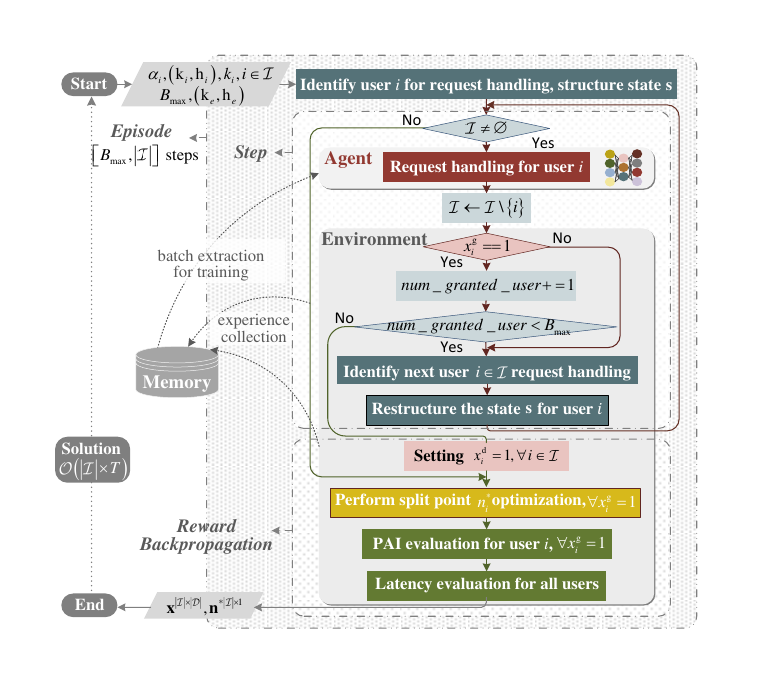}
    \caption{Flowchart for transitioning from extended GQAP to decision sequence.}
    \label{fig:flowchart}
    \vspace{-0.5cm}
\end{figure}
\subsection{DRL-based Request Handling Optimization}
\label{sec:outer op}
In this subsection, we aim to present a novel low-complexity DRL-based algorithm for the outer-layer problem of an extended GQAP.  Specifically, to solve the problem using DRL, we first formulate it as a sequential decision problem, which is detailed in Section~\ref{sec: MDP}. Subsequently, the optimal decision variable $\mathbf{x}$ is determined by progressively making step-by-step decisions that optimize the long-term reward, corresponding to the objective of \eqref{P1}. The decision-making policy for each step is introduced in Section~\ref{sec: DQN}.
\begin{figure}[t]
    \centering
    \includegraphics[width=0.9\linewidth]{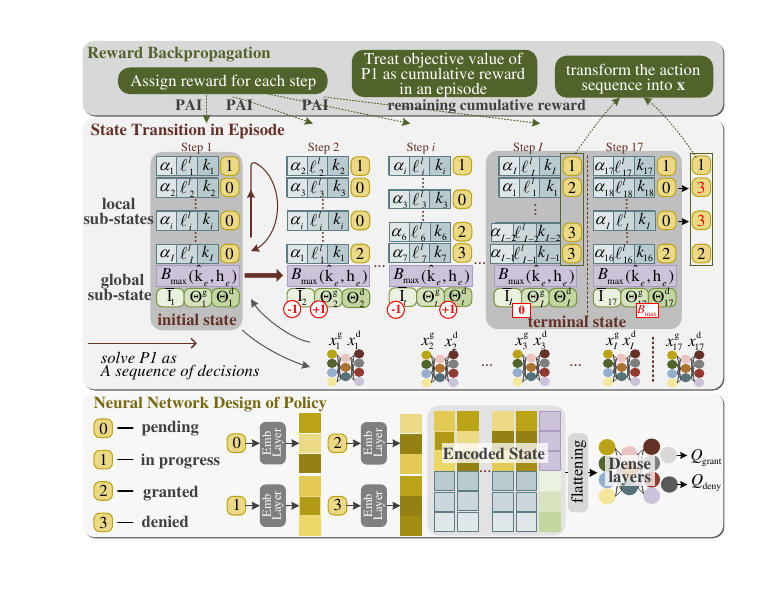}
    \caption{Illustration of DQN-based request handling.}
    \label{fig: DQN}
    \vspace{-0.5cm}
\end{figure}
\subsubsection{MDP construction}
\label{sec: MDP}
{\color{blue}As stated in Section~\ref{sec:2-3}, unlike ordering problems such as TSP, incrementally constructing a sequence of nodes is not inherent in variants of GQAP. Consequently, it cannot be straightforwardly transformed into a sequential MDP model, as is common in most existing COPs~\cite{khalil2017learning,ma2019combinatorial,cappart2021combining}. Moreover, manually applying masking to exclude previously selected actions from the action space is not a feasible way to characterize MDP transitions in this context.
Therefore, we propose a novel MDP transformation paradigm with the following key insights:
\begin{itemize}
    \item \textit{Sequential Decision Modeling}: The request handling process for each user is treated as a stepwise decision-making task, where the decoupled optimization on $n_i^*$ can integrated into the environment that the agent interacts with in each step.
    \item \textit{User State Indicator}: To capture dependencies among users, we introduce an indicator in states to track past actions for each user, the current user being processed, and pending users.
    \item \textit{Cyclic Shifting Mechanism}: This mechanism enhances the agent’s ability to distinguish between different users’ states during transitions.
    \item \textit{Reward Backpropagation}: Since rewards depend on both past and future decisions, we propagate rewards backward to ensure accurate assignment at each step.
    \item \textit{Local-Global State Design}: The MDP state consists of  local sub-states, characterizing individual users, and  global sub-states, which tracks resource constraints such as the maximum and current number of granted users.
    \item \textit{Dynamic-Static State Design}: The MDP state comprises  dynamic sub-states, capturing state transitions within an episode, and  static sub-states, randomly initialized at the episode's start to help the policy adapt to varying system configurations.
    \item \textit{Constraint-Aware Termination}: Instead of using penalty terms, episodes terminate automatically according to the global sub-state when resource constraints reach their limits, simplifying the learning process.
\end{itemize}}
The specific MDP is shown in Fig.~\ref{fig:flowchart} and described as below.

\textbf{State Space}: To facilitate the agent to provide the optimal solution for a specific scenario, it is essential to incorporate all factors related to the objective value into the state. This includes both \textit{static} information about all users and the edge server, such as computing power, and \textit{dynamic} information shaped by decisions in prior steps. Specifically, as shown in Fig.~\ref{fig: DQN}, we define the state as comprising both local and global sub-states, each further divided into static and dynamic elements. Denote the index of steps by $t$. The local sub-state for user $i$ is denoted by ${\bf{s}}_{t,i}^{\rm l} = \left[ {{\alpha _i},{\ell_i},{k_i};{o_{i,t}}} \right]$. Therein, ${\alpha _i}$ is the emphasis of user $i$ on the PAI and E2E latency,  ${k_i}$ is the moments when user $i$ sends the request to user $i$, and ${\ell_i} = {\rm k}_i + {\rm h}_i$ represents the latency per local denoising step, which is dependent on the computing power of user $i$'s device. These three elements are static during the environment evolution. In contrast, $o_{i,t}$ is the dynamic element, which acts as the indicator of four status. For users whose requests are not handled, $o_{i,t} = 0$; for the user which is identified to be in progress of request handling in the current step, $o_{i,t} = 1$; for the user whose request has been granted, $o_{i,t} =2$; and for the users whose request has been denied, $o_{i,t} =3$. 
\textit{For clarity, we denote the index of the user with  $o_{i,t} = 1$ as $i =t$.}
Meanwhile, the global state is denoted by ${\bf{s}}_{t}^{\rm{g}} = \left[ {{B_{\max }},{{\rm{\hat k}}_e},{{
 \rm{h}}_e};\bar {\rm I}_t,{\Theta _t^{\rm{g}}},{\Theta _t ^{\rm{d}}}} \right]$. Similarly, the first three elements are static, where $B_{\max}$ represents the  maximum number of users allowed to offload to the edge server in a round, and ${{\rm{\hat k}}_e}$,  $\left({{\rm{\hat k}}_e} = {{\rm{k}}_e}/G\right)$, together with ${{
 \rm{h}}_e}$ jointly characterizes the computing power of the edge server. Meanwhile, ${\bar {\rm I}}_t$, ${\Theta _t^{\rm{g}}}$, and ${\Theta _t^{\rm{d}}}$ are the dynamic elements in the global sub-state, which represent the total number of the users in pending, the total number of the granted users, and the total number of the denied users, respectively. In this context, as shown in Fig.~\ref{fig: DQN}, the complete representation of the state can  be given by ${{\mathbf{s}}_t} = [{\mathbf{s}}_{t,t}^{\text{l}},{\mathbf{s}}_{t,t + 1}^{\text{l}}, \ldots ,{\mathbf{s}}_{t,I}^{\text{l}},{\mathbf{s}}_{t,1}^{\text{l}}, \ldots ,{\mathbf{s}}_{t,i}^{\text{l}}, \ldots ,{\mathbf{s}}_{t,t - 1}^{\text{l}},{\mathbf{s}}_t^{\text{g}}]$.

\textbf{Action Space}: In each step $t$, the action of agent is to grant or deny the offloading request of user $i$ from the perspective of optimizing overall system performance. In this context, we have ${a_t} \in \left\{ {0,1} \right\}$. If $a_t = 1$, $x_t^{\rm g} = 1$ and $x_t^{\rm d} = 0$, otherwise, $x_t^{\rm d} = 1$ and $x_t^{\rm g} = 0$.

\textbf{Transition Rule}:  The state transition is determined solely by the action in a deterministic manner. Assume that in step~$t$, the request of user $t$  is in progress. Then, in the next state $\mathbf{s}_{t+1}$, $\bar {\rm I}_{t+1} = \bar {\rm I}_{t} - 1$. Moreover, if $a_t = 1$, $o_{t,{t+1}} = 2$ and ${\Theta ^{\rm{g}}_{t+1}}$ = ${\Theta ^{\rm{g}}_t}$ +1, otherwise $o_{t,{t+1}} = 3$ and ${\Theta ^{\rm{d}}_{t+1}}$ = ${\Theta ^{\rm{d}}_t}$ +1. Meanwhile, the indicator of the user for request handling in the next step is set to 1, i.e., $o_{t+1,t+1} = 1$. Moreover, to enable the agent to more easily capture the distinct characteristics of the currently selected user, we apply cyclic shifting as shown in Fig.~\ref{fig: DQN} to consistently position this sub-state at the beginning of the local sub-state. Additionally, there are two conditions for the end of an episode during the transition process: 1) if, in a state, ${\bar {\rm I}}_t =0$, it signifies that all users' requests have been processed, then the episode ends 2) if, in a state, the value of ${\Theta ^{\rm{g}}_t}$ = $B_{\max}$, it indicates that the maximum limit of granted users has been reached, then the remaining users' requests are directly considered as denied and the episode ends. That is, $o_t,i = 3$ for $i>t$.

\textbf{Reward}: Due to the batching technique, the total number of granted users $b_{\rm g}$  determines the inference latency of each offloaded denoising step across all  granted user. {\color{blue}This, in turn, affects the individual trade-off between E2E latency and PAI, as discussed in Section~\ref{sec:inner op}. Moreover, the reward for processing a user's request at each step depends not only on past decisions, which are embedded in the state representation, but also on future decisions, which significantly influence the final outcome.  As a result, the exact QoE for each user cannot be determined until the episode ends.
To address the challenge of sparse rewards in the learning process, we adopt a reward backpropagation mechanism to accurately assign rewards at each step.} As illustrated in Fig.~\ref{fig: DQN}, we first calculate the cumulative reward function--i.e., the objective function value--after the episode terminates {\color{blue}based on all  $o_i$ in the final state}. Then, each user’s achieved PAI at every step is  treated as an immediate reward, while the difference between the cumulative reward and the cumulative PAI from previous steps serves as the reward for the final step of the episode. The specific formulation is shown as
\begin{equation}
    {r_t} = \left\{ {\begin{array}{*{20}{c}}
{{\alpha _t}{\mathscr{F}}\left( {n_t^*} \right),}&{{\rm{done  =  FALSE,}}}\\
{\sum\nolimits_{i = t}^I {{\alpha _{i}}\mathscr{F}\left( {n_{i}^*} \right) - \sum\nolimits_{i = 1}^I {{L_i}} ,} }&{{\rm{done  =  TRUE,}}}
\end{array}} \right. \label{eq:reward}
\end{equation}
where $I$ represents the total number of users sending the request for offloading, and the flag ``done" serves as an indicator of whether the episode terminates. {\color{blue}Notably, as shown in the yellow block in Fig.~\ref{fig:flowchart}, ${\mathscr{F}}\left( {n_t^*} \right)$ is derived based on the convex optimization discussed in Section~\ref{sec:inner op}.  }

\subsubsection{PER-DQN-based policy design}
\label{sec: DQN}
Recall \eqref{eq:reward}; the objective function in \eqref{P1} is transformed into the expected cumulative reward ${\mathbb{E}}\left[ {\sum\nolimits_{t \in {\cal I}} {{\gamma ^t}{r_t}} } \right]$, with $\gamma = 1$.  Considering the discrete and small action space, we choose the simple, low-complexity DQN algorithm~\cite{mnih2015human} to learn a policy $\pi$ to maximize the action-state value at each step, which is defined as $Q\left( {{{\mathbf{s}}_t},{a_t}} \right) \triangleq \mathbb{E}\left[ {\sum\nolimits_{t' = t}^{{I_{{\text{end}}}}} {{\gamma ^{t'}}{r_{t'}}\left| {{{\mathbf{s}}_t},{a_t}} \right.} } \right]$. The optimal action-value function is defined as ${Q^*}\left( {{{\mathbf{s}}_t},a} \right) \triangleq {\max _\pi }{Q_\pi }\left( {{{\mathbf{s}}_t},{a_t}} \right)$, rom which the optimal policy can be  decided as ${\pi ^*} = \arg {\max _a}{Q^*}\left( {{{\mathbf{s}}_t},a} \right)$. Then, the Bellman equation  for ${Q^*}\left(  \cdot  \right)$ can be expressed by 
\begin{equation}
    {Q^*}\left( {{{\mathbf{s}}_t},{a_t}} \right) = \mathbb{E}\left[ {{r_t} + \gamma \mathop {\max }\limits_a {Q^*}\left( {{{\mathbf{s}}_{t + 1}},a} \right)\left| {{{\mathbf{s}}_t},{a_t},{\pi ^*}} \right.} \right].
\end{equation}
To enable the agent to accurately learn the Q-values corresponding to different $\left( {{{\mathbf{s}}_t},{a_t}} \right)$, and  more easily capture changes in the indicator, we use a neural network (NN) to model the relationship. Specifically, as shown in Fig.~\ref{fig: DQN},  we first embed the indicator within each local sub-state and then concatenate it with the static elements. Finally, the encoded state is flattened and fed into the dense layers.

\begin{figure*}
    \centering
    \includegraphics[width=0.9\linewidth]{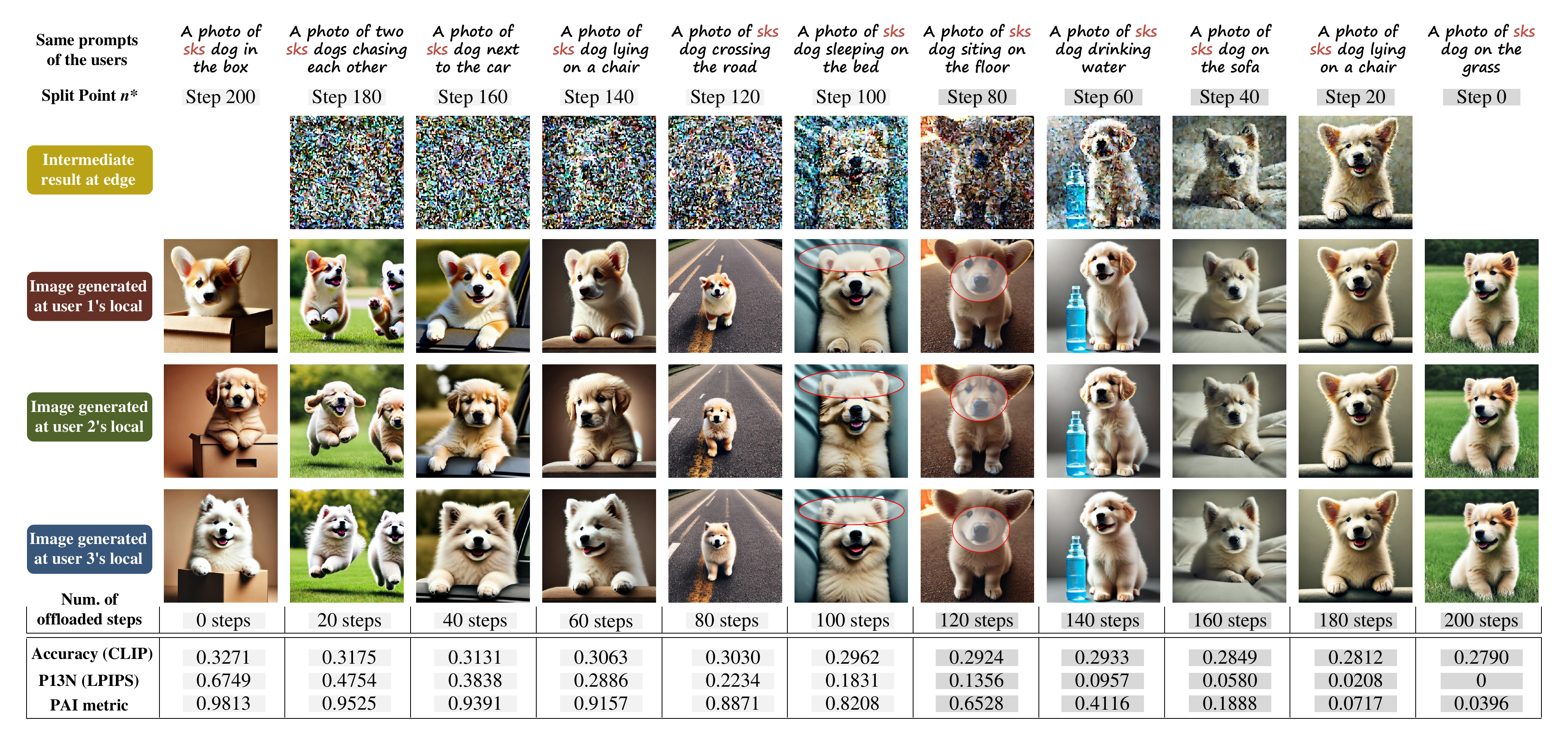}
    \caption{\color{blue}The visual results of hybrid inference. Therein, ``sks" in the prompts is the identifier for the personalized objects to the generated. Meanwhile, in evaluating the semantic accuracy using CLIP, we replace the original prompt term "sks" with personalized descriptive words such as "Corgi," "Golden Retriever," and "Samoyed",respectively for the three users. }
    \label{fig:virtualresult}
\end{figure*}

To help the agent find the optimal policy, we utilize a hybrid of \textit{$\varepsilon$-greedy exploration}~\cite{mnih2015human} and \textit{Boltzmann exploration}~\cite{cesa2017boltzmann}. Moreover, in order to keep the learning stability, we use two NNs with the same structure and the initial weights. One named current network is used to choose the action, the weights of which are denoted by ${{\boldsymbol{\theta }}_c}$ and update in each training step. The other named target network is used to calculate the Q-value, the weights of which are denoted by ${{\boldsymbol{\theta }}_t}$ and update periodically according to ${{\boldsymbol{\theta }}_c}$.
In each step, the agent samples a mini-batch of the experiences  from the \textit{memory} to train the NN. The loss function used in training is given by
\begin{equation}
{\mathcal{L}_{{\text{loss}}}} = \mathbb{E}\left[ {{{\left( {{y_t} - \left( {\mathop {\max }\limits_a {Q^*}\left( {{{\mathbf{s}}_t},a;{{\mathbf{\boldsymbol \theta }}_c}} \right)} \right)} \right)}^2}} \right],
\end{equation}
where ${y_{t = }}{r_t} + \gamma \mathop {\max }\limits_a {Q^*}\left( {{{\mathbf{s}}_{t + 1}},a;{{\mathbf{\boldsymbol \theta }}_t}} \right)$.
Moreover, given the high variance in experiences, we assign the experience about the terminal state more attention. When sampling experiences, we set the ratio of selected terminal states to preceding states at 1:7. Meanwhile, we also employ Prioritized Experience Replay (PER)~\cite{schaul2015prioritized}, where the  experiences are sampled based on their significance, prioritizing transitions with higher learning potential to improve training efficiency. The setting of hyperparameters is detailed in Section~\ref{sec:simulation}.

\begin{figure}
    \centering    \includegraphics[width=0.9\linewidth]{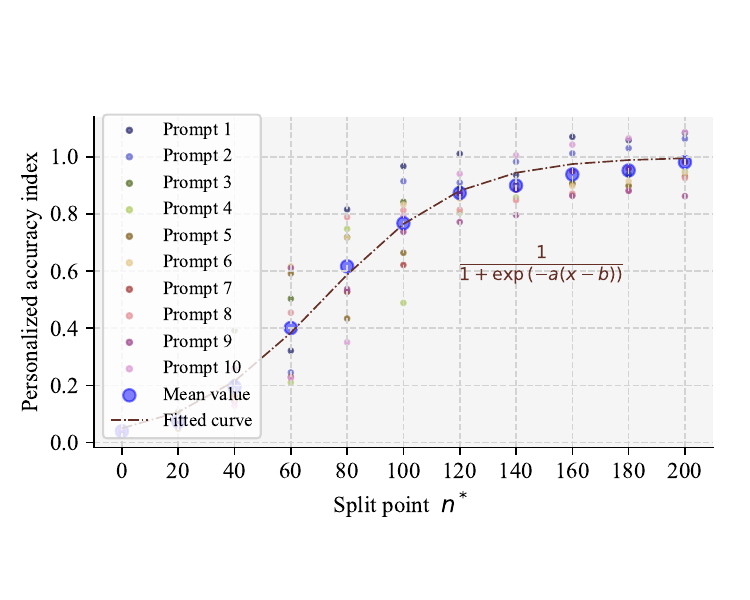}
    \caption{The fitting curve for the relation between PAI and  split point, along with the mean values of
accuracy and personalization for CLIP and LPIPS metrics across 200 sets of image generated from 10  prompts at
each split points.}
    \label{fig:fittingcurve}
    \vspace{-0.3cm}
\end{figure}
\section{Simulation and Evaluation}
\label{sec:simulation}
\textcolor{blue}{In this section, we detail the fine-tuning of the cluster-wide and local personalized models, along with PAI function fitting. We then describe the hybrid inference simulation setup and analyze PER-DQN performance. All related code, datasets, and fine-tuned models are available at \url{https://github.com/wty2011jl/E-MOPDM}}.

\subsection{Pre-deployment Fine-tuning \& PAI Fitting}
\textcolor{blue}{In this work, we employ the Dreambooth technique \cite{ruiz2023dreambooth} to fine-tune a cluster-wide model and three local personalized models. Each local dataset includes seven photos of the corresponding user's pet. The cluster-wide model's dataset comprises nine images: six selected from local datasets and three generated using prompts with detailed personalized descriptions. We fine-tune the text encoder and UNet using an NVIDIA GeForce RTX 4090, to redefine the specific representation corresponding to ``dog" for each user individually. Using the fine-tuned models, we simulate the hybrid inference process with the total denoising steps of 200 using 10 prompts, each prompt generate 20 images for each users. The virtual results have been shown in Fig.~\ref{fig:virtualresult}. To facilitate comparison, we present the personalized results generated by three users for each prompt. However, in our algorithm design, we assume that the prompts for each user are distinct.
From the figure,  we observe that  when  $n^* > 80$, shared offloading leads to uniform image layouts across three users, yet the dog portraits remain successfully personalized locally without additional detailed descriptions. Conversely, when $n^* < 80$, there is a significant decrease in personalized accuracy. Specifically, only the facial expressions retain the distinct characteristics of different dog breeds, while highly differentiated features such as the ears are not successfully altered.  Therefore, we choose $\hat N = 80$.
In addition, we can also observe that as the number of offloading steps increases, only the personalized aspects deteriorate, and people's visual evaluations align with the changes in our PAI metric defined in \eqref{eq:PAI}. Moreover,  we present a scatterplot showing the mean PAI values for 20 images generated per prompt across 10 different prompts, with varying numbers of the split point $n^*$
  as depicted in Fig.~\ref{fig:fittingcurve}. We then apply a sigmoid fitting to these scatter points, with parameters 
 $a_{\mathscr{F}} = 0.0413$ and $b_{\mathscr{F}} = 71.44$. }
\subsection{Hybrid Inference Simulation Setup}
\subsubsection{Scenario parameter settings}
We consider a scenario involving a single edge server and multiple users sending offloading requests within the past  $\Delta$. In the following simulations, the determination of inference latency and PAI is based on empirical functions derived from actual experiments, as shown in Figs.~\ref{fig:gpu} and~\ref{fig:fittingcurve}. Specifically, we assume that the edge server is equipped with $G$ GPUs, all of the H100 NVL, while users have a single, varying local GPU, which could be one of the other types: $\left\{ {{\text{RTX 2060, GTX 1080,}} \ldots {\text,~}{\text{RTX }}4090} \right\}$. Meanwhile, the emphasis $\alpha_i$ of user $i$ on the PAI and the latency is randomly selected from the interval $\left[ {\frac{{\left( {{\mathscr{L}_i}\left( 1 \right) - {\mathscr{L}_e}\left( {{{\hat b}_{\text{g}}},G} \right)} \right)}}{{{a_\mathscr{F}}\mathscr{F}(\hat N)(1 - \mathscr{F}(\hat N))}},\frac{{\kappa  \left( {{\mathscr{L}_i}\left( 1 \right) - {\mathscr{L}_e}\left( {{{\hat b}_{\text{g}}},G} \right)} \right)}}{{{a_\mathscr{F}}\mathscr{F}(N)(1 - \mathscr{F}(N))}}} \right]$, where we set ${{{\hat b}_{\text{g}}}} = 20$ and $\kappa  = 0.05$. The introduction of $\kappa$ accounts for the fact that users sending offloading requests tend to be more sensitive to latency to varying extend. Moreover, we set the total bandwidth $W_{\max} = 1$ MHz. Assuming the base station employs power control, the spectral efficiency for each user is set as $\eta = 10$ bits/s/Hz. Considering that the data sizes of the intermediate noisy image and prompt of different users are approximately equal, we substitute the specific noisy image and prompt for each user with their respective means. Thus, we have $s_i^{\rm p} = 216$ b, and $s_i^{\rm m} = 4.4$ Mb.

\subsubsection{PER-DQN hyperparameter settings}
The Q-network consists of one embedding layer,  three hidden linear layers and an output layer. The vocabulary size and embedding dimension of the embedding layer are set as 4 and 3, respectively. Each of the hidden  linear layer has 256 nodes. The output layer has two nodes. The target network is updated at a frequency of once every
2000 time steps. The memory capacity is 400000, the batch size is 128, and the learning rate is 0.0001.  The exploration probability $\varepsilon$  linearly decreases from 0.5 to 0.001. The temperature parameter of Boltzmann exploration linearly decreases from 5 to 0.01. During PER, the priority exponent is set to 0.7, the importance sampling weight is set to 0.3, and the priority offset is set to 0.00002. Moreover, during the training, the reward is scaled to 0.1 of its original value. {\color{blue}Moreover, the training is performed on a  GeForce RTX 4070 Ti. To save training time, the inference latency of different GPUs is obtained from prior empirical measurements on various GPU models, as summarized in Fig.~\ref{fig:gpu}}.
\begin{figure*}[ht]
    \centering
    \subfloat[\label{fig:offload_a}]{%
        \includegraphics[width=0.4\textwidth]{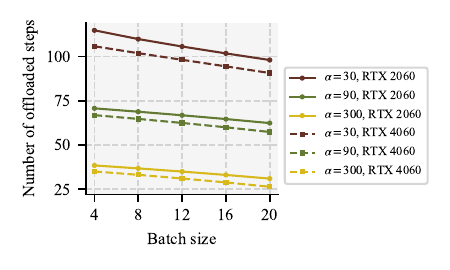} 
    }
    \qquad \qquad
    \subfloat[\label{fig:offload_b}]{%
        \includegraphics[width=0.4\textwidth]{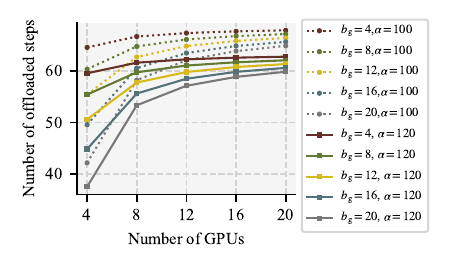} 
    }
    \caption{The relationship between the number of denoising steps offloaded and user emphasis $\alpha_i$ under varying configurations. (a) Analysis with differing batch sizes at the edge server and varying computational capacities of the local terminal. (b) Impact of different numbers of available GPUs on the edge server combined with varying batch sizes.}
    \label{fig:num_offloaded step}
\end{figure*}
\begin{figure*}[ht]
    \centering
    \subfloat[\label{fig:learning_1}]{%
        \includegraphics[width=0.29\textwidth]{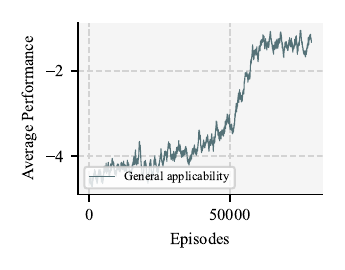} 
    }
    \hfill
    \subfloat[\label{fig:learning_2}]{%
        \includegraphics[width=0.29\textwidth]{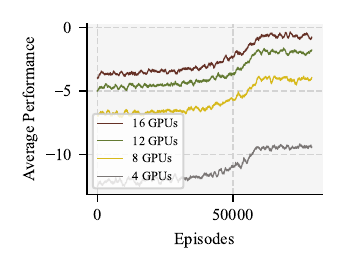} 
    }
    \hfill\subfloat[\label{learning_3}]{%
        \includegraphics[width=0.29\textwidth]{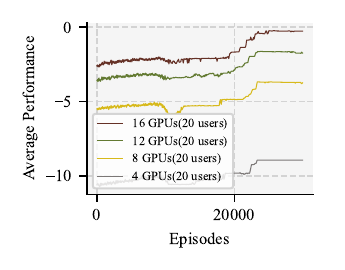} 
    }
    \caption{
Learning curves for DRL algorithms across varying scopes: (a) General application with 2000 random seeds and smoothing over 1000 points. (b) GPU-constrained application with 1000 random seeds and similar smoothing. (c) Specific application with a single seed and smoothing over 100 points.}
    \label{fig:learning_curve}
\end{figure*}
\begin{figure*}[ht]
    \centering
    \subfloat[\label{fig:application1}]{%
        \includegraphics[width=0.33\textwidth]{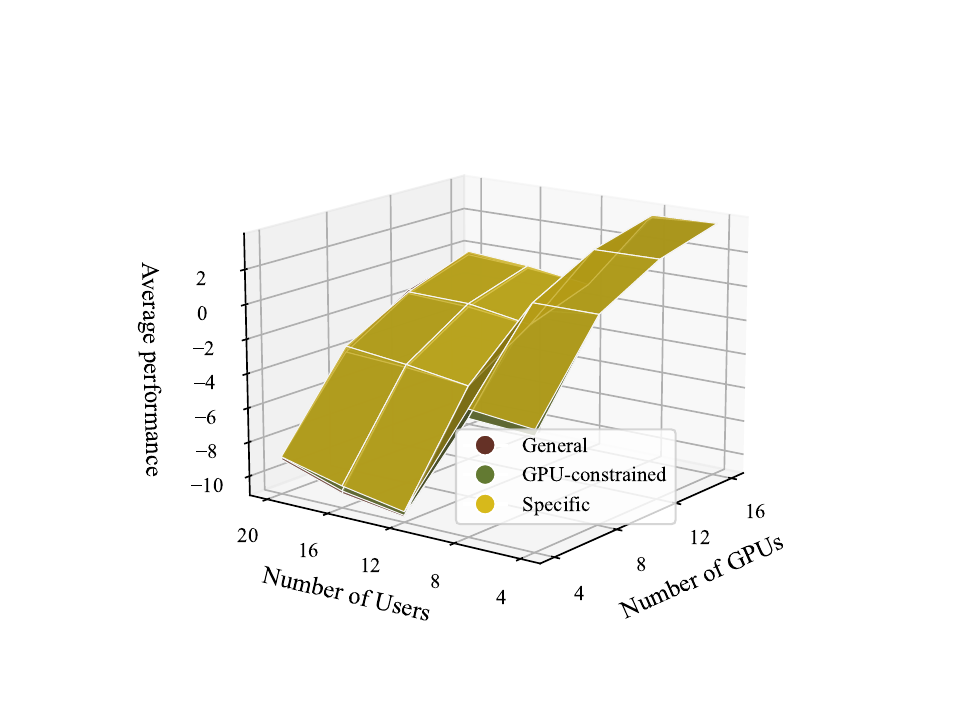} 
    }
    \hfill
    \subfloat[\label{fig:application2}]{%
        \includegraphics[width=0.31\textwidth]{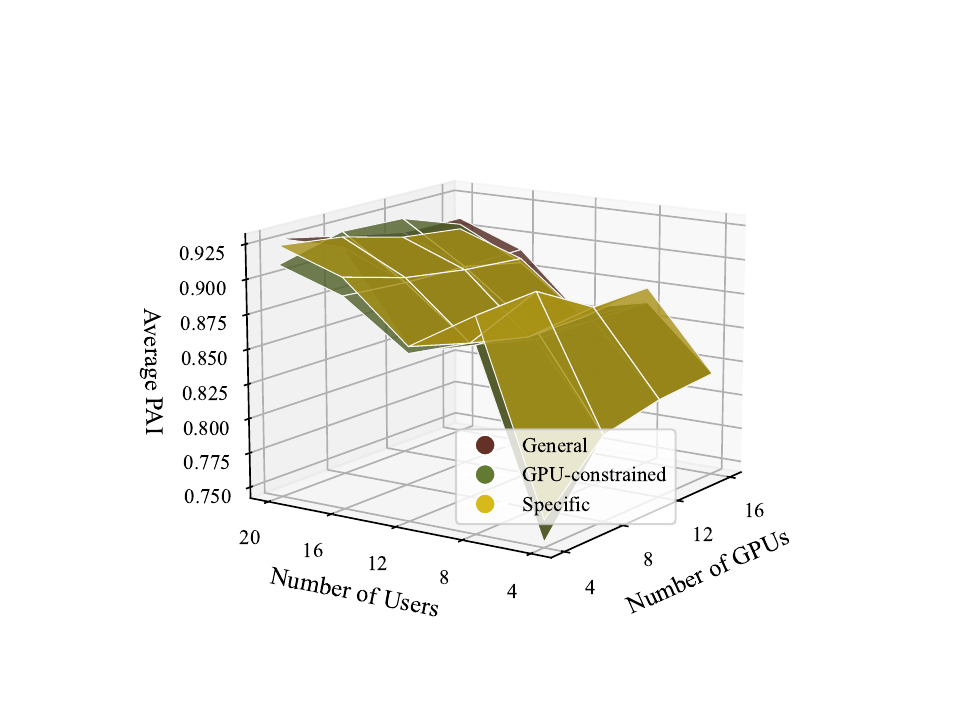} 
    }
    \hfill\subfloat[\label{fig:application3}]{%
        \includegraphics[width=0.33\textwidth]{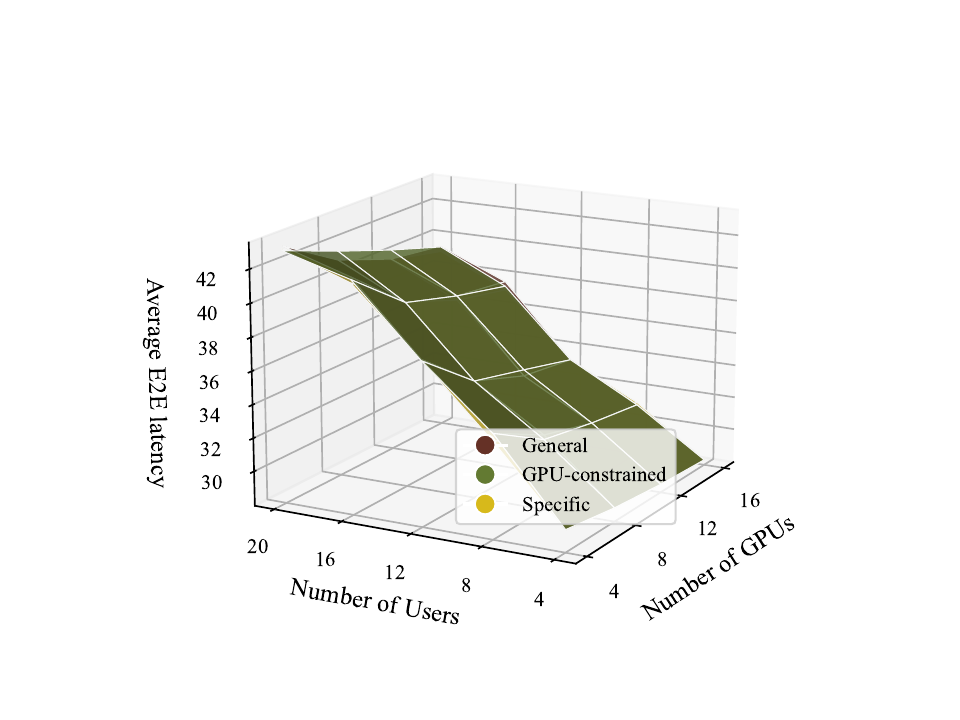} 
    }
    \caption{Performance comparison of three DRL algorithms with different applicability scopes for the same specific case. (a) Overall all performance. (b) Average PAI. (c) Average E2E latency.}
    \label{fig:application}
\end{figure*}
\subsection{Hybrid Inference Performance Analysis}
Firstly, we analyze the key factors influencing the optimal offloaded denoising steps and their respective impacts. As shown in Fig.~\ref{fig:offload_a}, given an edge server with 8 available GPUs, as the batch size at the edge increases, the inference latency per denoising step at the edge server lengthens, and the optimal number of offloaded steps gradually decreases. Additionally, for the same batch size, the stronger the computational capacity of the user’s local device, the fewer offloaded steps are required, with more denoising steps executed locally.
Furthermore, as illustrated in Fig.~\ref{fig:offload_b}, with an increase in the number of GPUs on the edge server and the resulting improvement in inference speed, the optimal number of offloaded steps also increases. Both sub-figures indicate that as the value of $\alpha_i$ increases, reflecting heightened user focus on PAI, the optimal offloaded steps decrease accordingly. Moreover, when computational resources are relatively abundant, 
$\alpha_i$ exerts a dominant influence on the optimal offloading steps. Conversely, with a smaller number of available GPUs, batch size has a more significant impact on the optimal number of offloaded steps.

\begin{figure*}[ht]
    \centering
    \subfloat[\label{fig:cc_algorithm_1}]{%
        \includegraphics[width=0.33\textwidth]{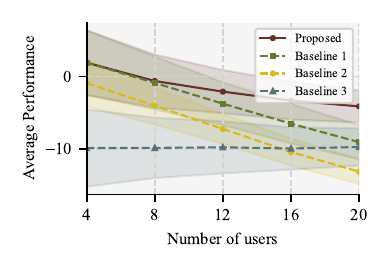} 
    }
    \subfloat[\label{fig:cc_algorithm_2}]{%
        \includegraphics[width=0.33\textwidth]{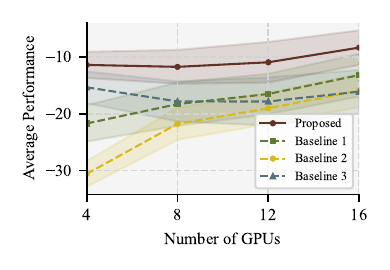} 
    }
    \subfloat[\label{fig:cc_algorithm_3}]{%
        \includegraphics[width=0.33\textwidth]{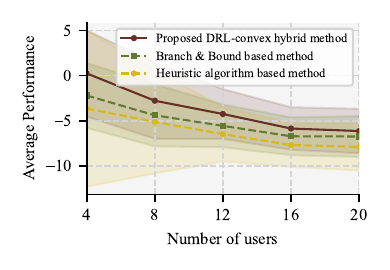} 
    }
    \caption{Performance comparison : (a) Comparison between three baselines: Average performance vs. number of users (100 cases/user count); (b) Comparison between three baselines: Average performance vs. number of GPUs on edge server (100 cases/GPU configuration). (c)Comparison between DRL-convex hybrid, Branch \& Bound, and Heuristic methods (100 cases/user count).}
    \label{fig:cc_algorithm}
\end{figure*}

Next, we integrate the aforementioned split point optimization into the modeling of environmental transitions in the MDP model. Through simulations, we verify the effectiveness of the proposed DQN-convex hybrid solution. Given different applicability scopes, we train three types of Q-networks tailored for the following three scenarios:
\begin{itemize}
    \item \textit{General Applicability}: For scenarios with evolving computational resources at the edge, dynamic numbers of users, and various user combinations, with a greater emphasis on the algorithm's generalizability.
    \item \textit{GPU-Constrained Applicability}: For cases with specific computational resources, dynamic numbers of users, and various user combinations, striking a balance between generalizability and optimality.
    \item \textit{Specific Applicability}: For scenarios with fixed computational resources, a specific number of users, and a particular user combination, focusing more the with a greater focus on the algorithm's optimality.
\end{itemize}
The learning curves for the three application scopes are shown in Fig.~\ref{fig:learning_curve}. From the figure, we can observe that the agent can learn effectively in all three cases. Additionally, the more generalizable the model, the greater the fluctuation in its convergence curve, which is mainly due to the inherent diversity of the environment. Moreover, the performance of models with different applicability scopes is shown in Fig.~\ref{fig:application}. From Fig.~\ref{fig:application1},  we observe that the model tailored to specific applications achieves  generally superior performance. Specifically, excluding the impact of local computational heterogeneity and user preferences, the average performance tends to increase as the number of users decreases and the number of available GPUs on the edge server increase. As shown in Fig.~\ref{fig:application2} and~\ref{fig:application3}, this improvement is primarily due to the inference latency reduction achieved through offloading, which outweighs the performance loss from potential PAI degradation. {\color{blue}However, it is important to note that each performance value in this figure is derived from a specific configuration. Thus, the increase in user count from 8 to 12 not only represents a 1/3 increase in users, but the additional users' computational power and emphasis parameters also impact performance. As shown in Fig.~\ref{fig:cc_algorithm_1}, when 100 cases are randomly generated for different user counts and then averaged, the sudden "jump" from 12 to 8 in the curve is smoothed out.}
Given the slight performance difference across the three algorithms, the model for general application is preferred in highly dynamic environments, as it avoids the computational cost associated with frequent retraining and fine-tuning. For scenarios with a fixed set of requesting users, the model tailored for specific applications can be selected to achieve optimal performance. The model for GPU-constrained applications strikes a balance between generality and optimality by training separate models tailored to each GPU availability scenario, leveraging the model redundancy to accommodate varying GPU constraints.

In the subsequent algorithm comparisons, we focus on the more common dynamic scenarios and use the \textit{general model} for performance analysis. Firstly, we compare this algorithm with the commonly used straightforward approaches for personalized diffusion model inference. Three baseline methods considered are as follows:
\begin{itemize}
    \item \textit{Baseline~1}: Within the maximum batch size supported by the edge server, all users offload to the edge, with optimization of the split point.
    \item \textit{Baseline~2}: Within the maximum batch size supported by the edge server, all users offload to the edge, with a fixed offloading of $\hat N$ steps.
    \item \textit{Baseline~3}: All users perform the complete inference locally. 
\end{itemize}
For {baseline~1} and {baseline~2}, if the total number of users sending requests exceeds the maximum batch size supported by the edge server, users who send requests later perform inference locally. Fig.~\ref{fig:cc_algorithm} presents a comparison of the proposed algorithm with three baseline methods, illustrating the performance trends as the number of users and the available GPUs on the edge server vary. Specifically, in Fig.~\ref{fig:cc_algorithm_1}, we fix the number of available GPUs on the edge server to 8, while in Fig.~\ref{fig:cc_algorithm_2}, we fix the number of users in each case to 20. Observing both subfigures, baseline~3, where all inference is completed locally by users, shows no noticeable trend in average performance as the number of users or available GPUs changes. In contrast, for the other three methods, average performance declines as the number of users increases or the number of GPUs decreases. Furthermore, since baseline~1 incorporates split point optimization compared to baseline~2, baseline~1 consistently maintains a stable performance gap above baseline~2. Additionally, due to its joint optimization of split point and request handling, the performance variation of the proposed method is narrower than that of baselines~1 and~2, and it consistently outperforms other baseline algorithms. This result further indicates that the algorithm can flexibly adaptively trade off between PAI and latency across different scenarios, achieving optimal overall performance.

Finally, we demonstrate the effectiveness and optimality of the DRL-convex hybrid solution for the extended GQAP by comparing it with two widely used methods for GQAP: a heuristic algorithm and a branch \& bound-based Integer Linear Programming (ILP) method. The heuristic algorithm selects is a genetic algorithm with a population size of 
$P=100$ and an iteration count of 
$M=200$. Additionally, as the optimal split point lacks a closed-form solution, we omit the split point optimization in the Branch \& Bound-based ILP method for simplicity. As illustrated in Fig.~\ref{fig:cc_algorithm_3}, the proposed DRL-convex hybrid solution outperforms the other methods, achieving superior performance. Specifically, due to the omit of the split point optimization, the Branch \& Bound based method, the optimality of the method consistently remains lower than that of our proposed method. Additionally, although the heuristic algorithm does not require the rigorous mathematical formulation demanded by the Branch \& Bound based method, it can incorporate split point optimization. However, its stability in terms of optimality is noticeably lower than that of the other two algorithms. As a result, while it may occasionally achieve the same level of optimality as our proposed DRL-convex hybrid method, its average performance is the poorest. Moreover, the complexity of the three methods with with respect to user
count $I$ as shown in Table~\ref{tbl:complexity}. From the table, since in the process of constructing the MDP model, we transform the original problem with high-dimensional optimization variables into a decision sequence problem focused on a single variable, the proposed DRL-convex hybrid method shows a linear complexity of $O\left(I\right)$. This linear growth suggests that the DRL-Convex Hybrid Method is scalable and well-suited for scenarios with a large number of users. In contrast, the Heuristic Algorithm-Based Method has a complexity of 
$O\left( {2 \cdot P \cdot M \cdot I} \right)$. Although it also scales linearly with $I$, its overall complexity is influenced by $P$ and 
$M$, requiring careful parameter tuning to ensure efficiency as the user count rises.
In addition,
the conventional Branch \& Bound-Based Method presents an exponential complexity of 
$O\left( {{2^{2 \times I}}} \right)$, which grows rapidly with increasing 
$I$. This exponential growth makes it 
 work for small-scale problems and impractical for large-scale scenarios, as the computational cost becomes prohibitive. Overall, with a well-trained Q network,
 the DRL-convex hybrid algorithm is more viable for larger user counts due to their manageable complexity.
\begin{table}[]
\footnotesize
\centering
\caption{Complexity comparison  regarding user count.}
\begin{tabular}{c|l|c}
\hline
\multicolumn{1}{c|}{Order$\uparrow$} & Name of Method                   & \multicolumn{1}{c}{Complexity} \\ \hline \hline
1                                     & DRL-convex hybrid         & $O\left( I \right)$                                        \\ \hline
2                                     & Heuristic algorithm  & $O\left( {2 \cdot P \cdot M \cdot I} \right)$              \\ \hline
3                                     & Branch \& Bound     & $O\left( {{2^{2 \times I}}} \right)$                       \\ \hline
\end{tabular}
\vspace{-0.5cm}
\label{tbl:complexity}
\end{table}

\section{Conclusion}
\label{sec:conclusion}
We proposed an efficient offloading framework for deploying personalized SDMs in multi-user scenarios with diverse computing capabilities. 
To balance latency and accuracy, we have introduced a tunable emphasis parameter and formulated offloading and split-point optimization as an extended GQAP. A DRL-convex hybrid approach has enabled real-time decision-making. Simulations have demonstrated the effectiveness of the proposed framework and solutions.

{\color{blue}However, several limitations remain: 1) In this work, the cluster-wide model and local models share the same size and are independently trained on different datasets. This may limit the cluster-wide model’s ability to capture common features effectively, and the impact of switching between inference models has not been considered. 2) Our approach assumes a given cluster as the starting point. However, key aspects such as evaluating user similarity within a cluster, defining cluster partitions, and understanding the relationship between cluster-wide model generalization and individual users’ final PAI remain unexplored.}

In light of above, in future work, we will pursue further optimization from the three perspectives: 1) to enhance the generalization capability of the model, we will design training optimization methods for cluster-specific models, achieving a trade-off between model size and {\color{blue}PAI} performance; 2) we will explore optimal clustering methods based on task similarity to balance the storage and computational energy consumption of multi-cluster-wide models with {\color{blue}PAI} performance.  {\color{blue}3) We will develop a joint training algorithm for both the cluster-wide model and the personalized local models based on the federated learning with the ingenious hierarchical clustering-based aggregation method. This can also address the scenarios where users have insufficient local data.}
\bibliographystyle{IEEEtran}
\bibliography{ref}

\begin{IEEEbiography}[{\includegraphics[width=1in,height=1.25in,clip,keepaspectratio]{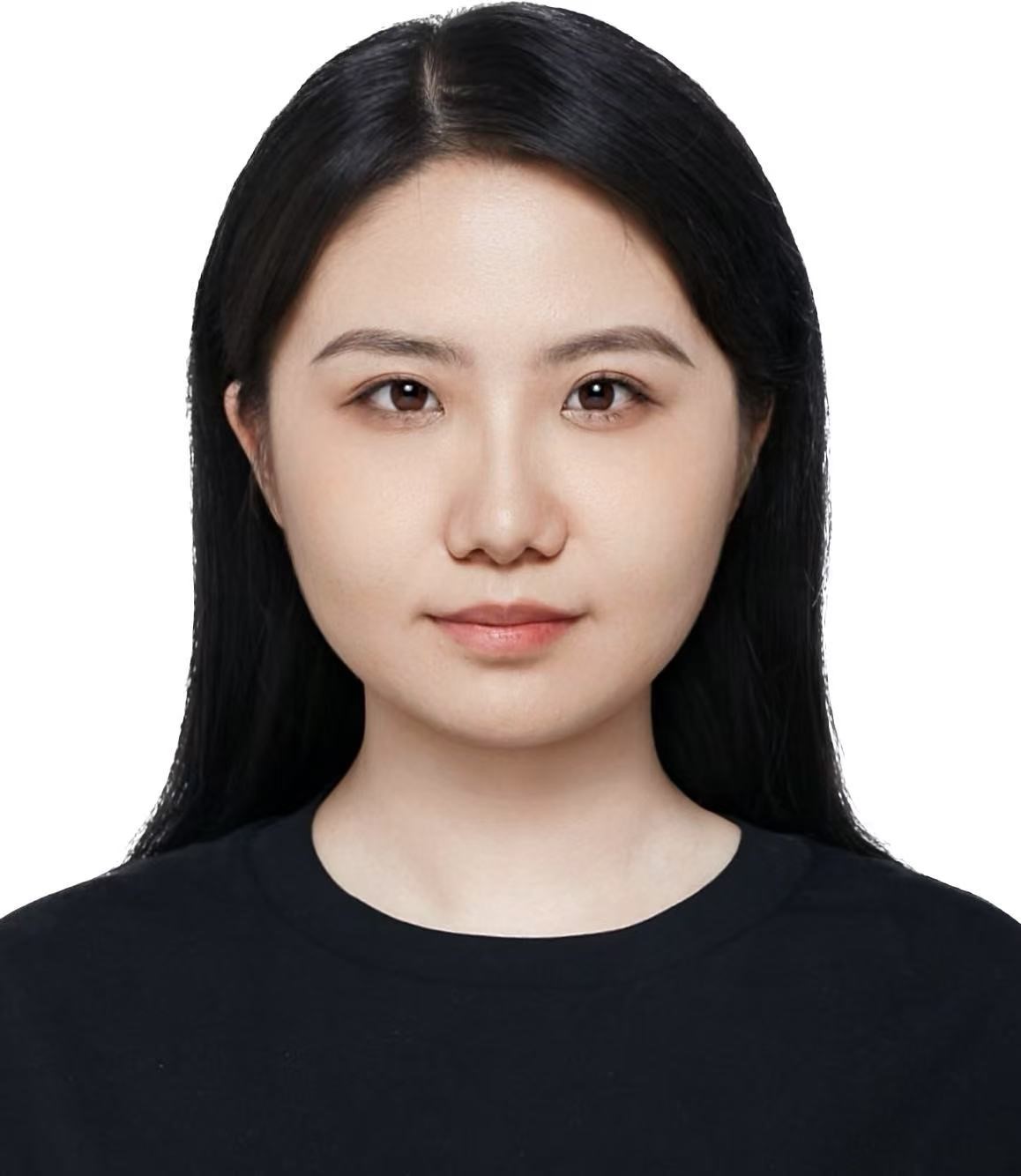}}]{Wanting Yang}

is currently a research fellow and scientist with Singapore University of Technology and Design. She received the B.S. degree and the Ph.D. degree from the Department of Communications Engineering, Jilin University, Changchun, China, in 2018 and 2023, respectively. She was a visiting student with Singapore University of Technology and Design. She
served as Technical Programme Committee member and reviewers in flagship
conferences, such as WCNC, Globecom, ICC, and VTC, and top journals. Her research interests include  semantic communication, deep reinforcement learning, martingale theory, edge computing, edge intelligence, generative AI.
\end{IEEEbiography}

\begin{IEEEbiography}[{\includegraphics[width=1in,height=1.25in,clip,keepaspectratio]{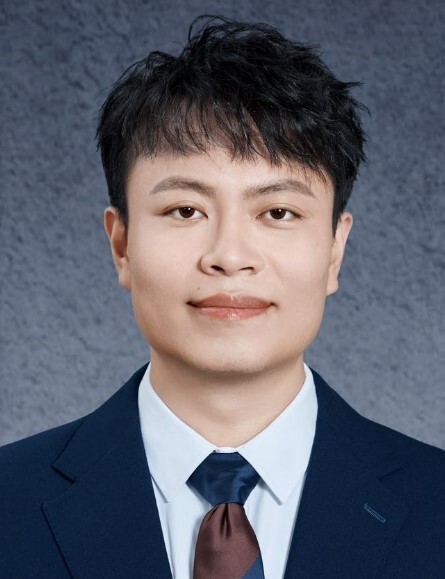}}]{Zehui Xiong} is currently with Singapore University of Technology and Design, Singapore.  Recognized as a Highly Cited Researcher, he has published more than 250 research papers in leading journals, and he has won over 10 Best Paper Awards in international conferences.  He is the recipient of Forbes Asia 30u30, IEEE Asia Pacific Outstanding Young Researcher Award, IEEE Early Career Researcher Award for Excellence in Scalable Computing, IEEE Technical Committee on Blockchain and Distributed Ledger Technologies Early Career Award, IEEE Internet Technical Committee Early Achievement Award, IEEE TCSVC Rising Star Award, IEEE TCI Rising Star Award, IEEE TCCLD Rising Star Award, IEEE ComSoc Outstanding Paper Award, IEEE Best Land Transportation Paper Award, IEEE Asia Pacific Outstanding Paper Award, IEEE CSIM Technical Committee Best Journal Paper Award, IEEE SPCC Technical Committee Best Paper Award, IEEE Big Data Technical Committee Best Influential Conference Paper Award, and IEEE VTS Singapore Best Paper Award. He has served as the Associate Director of Future Communications R\&D Programme, and Deputy Lead of AI Mega Centre.
\end{IEEEbiography}

\begin{IEEEbiography}[{\includegraphics[width=1in,height=1.25in,clip,keepaspectratio]{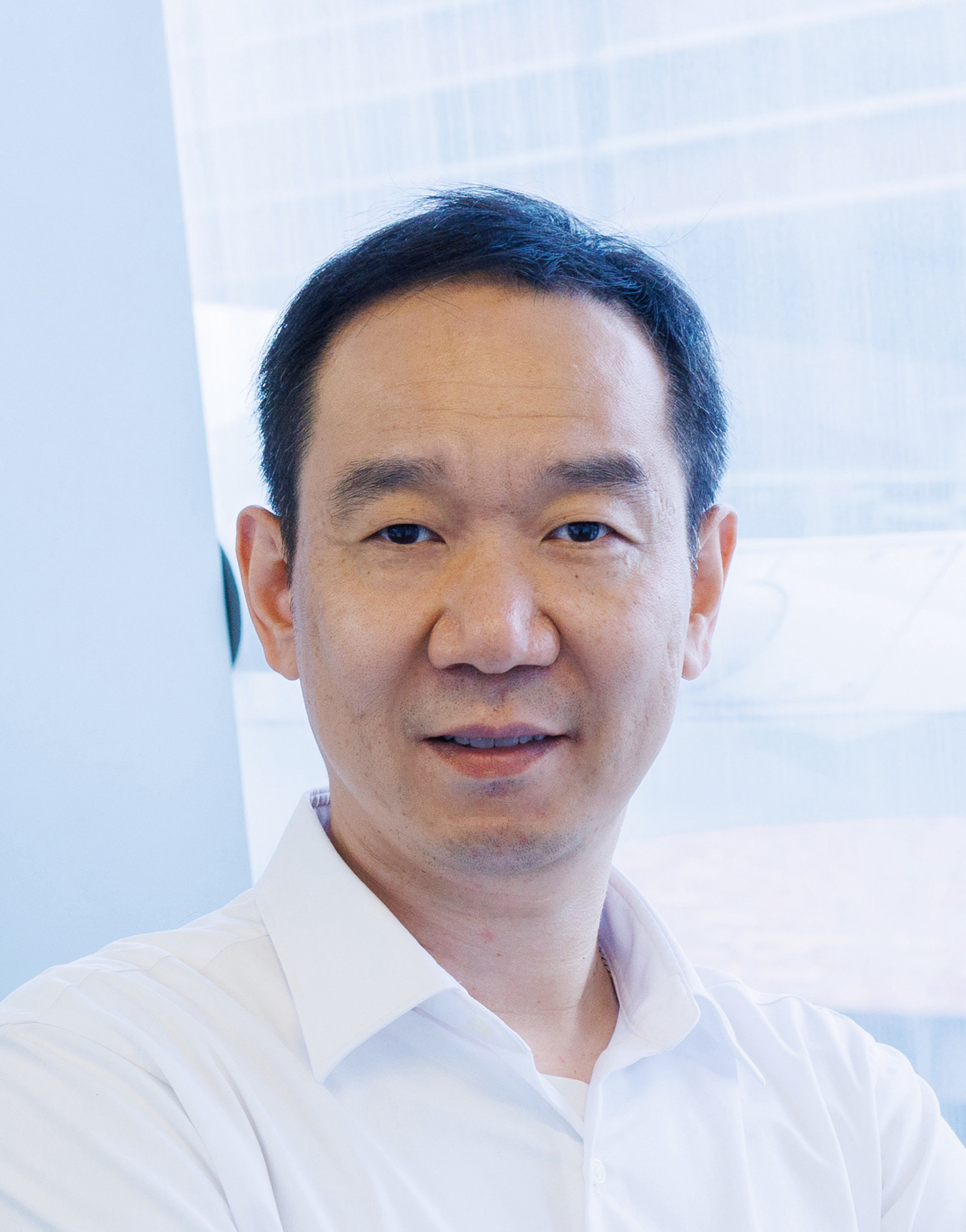}}]{Song Guo}
is a Full Professor in the Department of Computer Science and Engineering at the Hong Kong University of Science and Technology. He also holds a Changjiang Chair Professorship awarded by the Ministry of Education of China. His research interests are mainly in the areas of big data, edge AI, mobile computing, and distributed systems. With many impactful papers published in top venues in these areas, he has been recognized as a Highly Cited Researcher (Web of Science) and received over 12 Best Paper Awards from IEEE/ACM conferences, journals, and technical committees. Prof. Guo is the Editor-in Chief of IEEE Open Journal of the Computer Society. He has served on the IEEE Communications Society Board of Governors, IEEE Computer Society Fellow Evaluation Committee, and editorial board of a number of prestigious international journals like IEEE Transactions on Parallel
and Distributed Systems, IEEE Transactions on Cloud Computing, IEEE Internet of Things Journal, etc. He has also served as chair of organizing and technical committees of many international conferences. Prof. Guo is an IEEE Fellow and an ACM Distinguished Member.
\end{IEEEbiography}
\begin{IEEEbiography}[{\includegraphics[width=1in,height=1.25in,clip,keepaspectratio]{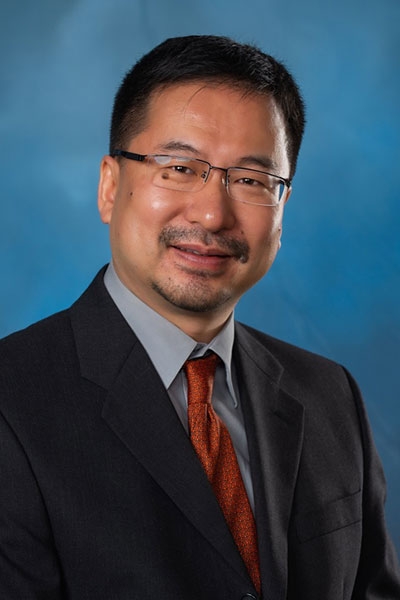}}]{Shiwen Mao}
 is a Professor and the Earle C. Williams Eminent Scholar Chair, and the Director of the Wireless Engineering Research and Education Center, Auburn University, Auburn, AL, USA. His research interest includes wireless networks, multimedia communications, and smart grid. He received the IEEE ComSoc MMTC Outstanding Researcher Award in 2023, the IEEE ComSoc TC-CSR Distinguished Technical Achievement Award in 2019, and the NSF CAREER Award in 2010. He is a co-recipient of the 2022 Best Journal Paper Award of IEEE ComSoc eHealth Technical Committee, the 2021 Best Paper Award of Elsevier/Digital Communications and Networks (KeAi), the 2021 IEEE Internet of Things Journal Best Paper Award, the 2021 IEEE Communications Society Outstanding Paper Award, the IEEE Vehicular Technology Society 2020 Jack Neubauer Memorial Award, the 2018 ComSoc MMTC Best Journal Paper Award and the 2017 Best Conference Paper Award, the 2004 IEEE Communications Society Leonard G. Abraham Prize in the Field of Communications Systems, and several ComSoc technical committee and conference best paper/demo awards. He is the Editor-in-Chief of IEEE TRANSACTIONS ON COGNITIVE COMMUNICATIONS AND NETWORKING. He is a Distinguished Lecturer of IEEE Communications Society and the IEEE Council of RFID.
\end{IEEEbiography}
\begin{IEEEbiography}[{\includegraphics[width=1in,height=1.25in,clip,keepaspectratio]{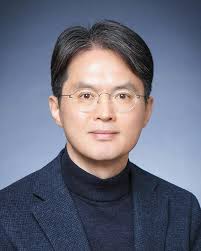}}]{Dong In Kim} received the Ph.D. degree in electrical engineering from the University of Southern California, Los Angeles, CA, USA, in 1990. He was a Tenured Professor with the School of Engineering Science, Simon Fraser University, Burnaby, BC, Canada. He is currently a Distinguished Professor
with the College of Information and Communication Engineering, Sungkyunkwan University, Suwon, South Korea. He is a Fellow of the Korean Academy of Science and Technology and a Life Member of the National Academy of Engineering of Korea. He was the first recipient of the NRF of Korea Engineering Research Center (ERC) in Wireless Communications for RF Energy Harvesting from 2014 to 2021. He received several research awards, including the 2023 IEEE ComSoc Best Survey Paper Award and the 2022 IEEE Best Land Transportation Paper Award. He was selected the 2019 recipient of the IEEE ComSoc Joseph LoCicero Award for Exemplary Service to Publications. He was
the General Chair of the IEEE ICC 2022, Seoul. From 2001 to 2024, he served as an Editor, an Editor at Large, and an Area Editor of Wireless Communications I for IEEE Transactions on Communications.  He has been listed as a 2020/2022 Highly Cited Researcher by Clarivate Analytics.
\end{IEEEbiography}

\begin{IEEEbiography}[{\includegraphics[width=1in,height=1.25in,clip,keepaspectratio]{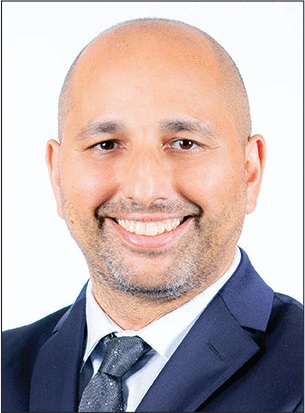}}]{Mérouane Debbah}
is a Professor at Khalifa University of Science and Technology in Abu Dhabi and founding Director of the KU 6G Research Center. He is a frequent keynote speaker at international events in the field of telecommunication and AI. His research has been lying at the interface of fundamental mathematics, algorithms, statistics, information and communication sciences with a special focus on random matrix theory and learning algorithms. In the Communication field, he has been at the heart of the development of small cells (4G), Massive MIMO (5G) and Large Intelligent Surfaces (6G) technologies. In the AI field, he is known for his work on Large Language Models, distributed AI systems for networks and semantic communications. He received multiple prestigious distinctions, prizes and best paper awards for his contributions to both fields. He is a WWRF Fellow, a Eurasip Fellow, an AAIA Fellow, an Institut Louis Bachelier Fellow and a Membre émérite SEE.
\end{IEEEbiography}
\end{document}